\newtheorem{definition}{Definition}
\newtheorem{theorem}{Theorem}
\newcounter{example}
\newenvironment{example}[1][]{\refstepcounter{example}\par\medskip
	\noindent \textbf{Example~\theexample. #1} \rmfamily}{\medskip}
\definecolor{atomictangerine}{rgb}{1.0, 0.6, 0.4}
\definecolor{apricot}{rgb}{0.98, 0.81, 0.69}
\definecolor{antiquewhite}{rgb}{0.98, 0.92, 0.84}
\definecolor{Gray}{gray}{0.90}
\definecolor{LightCyan}{rgb}{0.88,1,1}
\newcolumntype{e}{>{\columncolor{Gray}}r}
\newtheorem{proposition}{Proposition}
\title{\resizebox{\textwidth+10pt}{!}{
\begin{minipage}{\textwidth+20pt}
\textsf{DualApp:} Tight Over-Approximation for Neural Network Robustness Verification via Under-Approximation
\end{minipage}
}} 
\newcommand{\linebreakand}{%
  \end{@IEEEauthorhalign}
  \hfill\mbox{}\par
  \mbox{}\hfill\begin{@IEEEauthorhalign}
}
\author{
    Yiting Wu \\
    51205902026@stu.ecnu.edu.cn \\
    East China Normal University \\
    Shanghai, China
    \and
    Zhaodi Zhang \\
    zdzhang@stu.ecnu.edu.cn \\
    East China Normal University \\
    Shanghai, China
    \and
    Zhiyi Xue \\
    51255902046@stu.ecnu.edu.cn \\
    East China Normal University \\
    Shanghai, China\\
    \linebreakand
    Si Liu\\
    {si.liu@inf.ethz.ch}\\
    ETH Z{\"u}rich\\
    \and
    Min Zhang \\
    zhangmin@sei.ecnu.edu.cn \\
    East China Normal University \\
    Shanghai, China
}
\begin{document}

	\maketitle

\begin{abstract}
	The robustness of neural networks is fundamental  to the hosting system's reliability and security. Formal verification has been proven to be effective in providing  provable robustness guarantees. To improve the verification scalability, over-approximating the non-linear activation functions in neural networks by linear constraints is widely adopted, which transforms the verification problem into an efficiently solvable linear programming problem. As over-approximations inevitably introduce overestimation, many efforts have been dedicated to defining 
	the tightest possible approximations. 
	Recent studies have however showed that the existing so-called \emph{tightest} approximations are superior to each other. 
	
	In this paper we identify and report an crucial factor in defining tight approximations, namely the \emph{ approximation domains of activation functions}. We observe that  existing approaches only rely on overestimated domains,  while the corresponding tight approximation may not necessarily be tight on its actual  domain. 
	We propose a novel under-approximation-guided approach, called \emph{dual-approximation}, to define tight over-approximations and  two complementary under-approximation algorithms based on sampling and gradient descent. The overestimated domain guarantees the soundness while the underestimated one guides the tightness. We implement our approach into a tool called \textsf{DualApp} and extensively evaluate it on a comprehensive benchmark of 84 collected and trained neural networks with different architectures. 
	The experimental results show that \textsf{DualApp} outperforms the state-of-the-art approximation-based approaches, with up to 71.22\% improvement to the verification result. 
	
	
	
\end{abstract}

	\section{Introduction}\label{sec:intro}
Deep neural networks (DNNs) are the most crucial components in AI-empowered software systems. They must be guaranteed reliable and dependable when the hosting systems are safety-critical. 
 Robustness is central to their safety and reliability,
  ensuring that neural networks can function correctly even under environmental perturbations and adversarial attacks \cite{szegedy2013intriguing,goodfellow2014explaining,wu2020robustness}. 
Studying the  robustness of DNNs from both training and engineering perspectives attracts researchers from both AI  and SE communities \cite{szegedy2013intriguing,goodfellow2014explaining,ilyas2019adversarial,LiYHS0Z22,LiuFY022,Pan20}. More recently, the emerging  formal verification efforts on the robustness of neural networks aim at providing certifiable robustness guarantees for the neural networks (see the surveys \cite{huang2020survey,liu2021algorithms,wing2021trustworthy} for details). Certified robustness of neural networks is necessity for guaranteeing that the hosting software systems are both safe and secure. A provably robust neural network ensures that no adversarial examples can falsify a network to mis-classify an input and thereafter cause unexpected behaviors in the hosting system. 
Robustness verification is particularly crucial to the neural networks planted in safety-critical applications such as autonomous drivings \cite{bojarski2016end,apollo}, medical diagnoses \cite{titano2018automated}, and access controls by face recognition \cite{sun2015deepid3}. 

Formally verifying the robustness  of neural networks is computationally complex and expensive due to the high non-linearity and non-convexity of neural networks. 
It has been proved to be NP-complete even for the simple fully-connected  networks with the piece-wise linear activation function ReLU  \cite{katz2017reluplex}. It is  significantly more difficult for those networks that contain differentiable S-curve  activation functions such as Sigmoid, Tanh, and Arctan \cite{zhang2018efficient}. 
To improve scalability, a practical solution is to over-approximate the nonlinear activation functions using linear upper and lower bounds. The verification problem is then transformed into an efficiently solvable linear programming problem. The linear over-approximation is a  prerequisite for other advanced verification approaches based on abstraction  \cite{singh2019abstract,pulina2010abstraction,elboher2020abstraction},  interval bound propagation (IBP) \cite{huang2019achieving}, and convex optimization \cite{wong2018provable,salman2019convex}.

As over-approximations inevitably introduce overestimation, the corresponding verification approaches  sacrifice completeness and may fail to prove or disprove the robustness of a neural network  \cite{liu2021algorithms}. Consequently, we cannot conclude that a neural network is not robust when we fail to prove it is robust by over-approximation. To resolve such uncertainties, an ideal approximation must be as tight as possible. Intuitively, an approximation is tighter if it introduces less overestimation.   

Considerable efforts have been devoted to finding tight over-approximations for precise verification results \cite{zhang2018efficient,lee2020lipschitz,wu2021tightening,lin2019robustness,DBLP:conf/nips/TjandraatmadjaA20}. 
Unfortunately, most of the tightness definitions lack  theoretical guarantees that a tighter approximation always implies a more precise verification result. Recent work has shown that none of them is superior to the others in terms of the verification results \cite{2208.09872}.
Lyu \textit{et al.} \cite{lyu2020fastened} and Zhang \textit{et al.}~\cite{2208.09872} claim that computing the tightest approximation is essentially a network-wise non-convex optimization problem, which is almost impractical to solve directly due to high computational complexity. They show that the scalability is significantly reduced when resorting to optimization. Hence, computing a tight approximation to each individual activation function is still an effective and practical solution. As  existing tightness characterizations of neuron-wise over-approximations cannot guarantee the  tightness, it is highly desirable to explore the missing factors on defining tighter neuron-wise approximations. 

In this paper we report a new crucial factor for defining tight over-approximations, namely the \emph{ approximation domains} of activation functions. It is  overlooked by existing approaches which consider only  over-estimated domains of the activation functions to approximate. However, an over-approximation that is tight on the overestimated domain may not be tight on the actual domain of the approximated function. 
There can be a tighter approximation of the actual domain.
 Unfortunately, computing the actual domain of the activation function on each neuron in a neural network is as difficult as the verification problem and thus impractical. 




Inspired by our new finding, we propose a novel under-approximation-guided over-approximation approach to define tight linear approximations for the robustness verification of neural networks. 
More specifically, we leverage the under-approximation to compute an underestimated domain for the approximated activation function. Moreover, we use both the underestimated domain and the overestimated domain to define a tight linear approximation for the function. We call it a \textit{dual-approximation} approach. We propose two under-approximation approaches, i.e.,  \textit{sampling-based} and  \textit{gradient-based}, to compute underestimated domains. The former is more time-wise efficient, while the latter is more precise in terms of verification results.
Our dual-approximation approach can produce tighter linear approximations than existing single-approximation approaches. 
We evaluate our approach on a comprehensive benchmark of 84 neural networks. 
The experimental results show that our approach outperforms all existing tight approaches, with up to 71.22\% improvement to the verification result.


In summary, we make three main contributions:
\begin{enumerate}[(1)]
	
	\item Identifying an crucial factor, namely approximation domain, in defining tight over-approximations for neural network robustness verification. 
	
	\item Presenting the first dual-approximation approach, i.e., under-approximation-guided  over-approximation, which outperforms the state-of-the-art approaches with up to 71.22\% improvement to  the robustness verification results. 
	
	\item Proposing two under-approximation approaches, i.e., gradient-descent-based and sampling-based, for computing under-approximated domains and experimentally demonstrating that the latter has a better performance than the former in the verification. 
	
\end{enumerate}

%


The remainder of this paper is organized as follows: Section \ref{sec:prel} gives preliminaries. Section \ref{sec:inter} discusses the interdependency between approximation domains and the tightness of approximations. We present our dual-approximation approach in Section \ref{sec:approx} and two under-approximation approaches in Section \ref{sec:under}, respectively. Section \ref{sec:exp} shows the experimental results. We discuss related work in Section \ref{sec:rel} and conclude the paper in Section \ref{sec:conc}.

	\section{Preliminaries}\label{sec:prel}
This section introduces the basic notation and terminology used throughout the paper. 
\vspace{-1mm}	 
\subsection{Deep Neural Networks}
\vspace{-1mm}
A deep neural network is a network of neurons as shown in Fig. \ref{fig:dnn}, which implements a mathematical function $F:\mathbb{R}^n \rightarrow \mathbb{R}^{m}$, e.g., $n=3$ and $m=2$ for the 2-hidden-layer DNN in Fig. \ref{fig:dnn}. Neurons except input ones 
are also functions $f:\mathbb{R}\rightarrow \mathbb{R}$ in the form of $f(x)=\sigma(Wx+b)$, where $\sigma(\cdot)$ is called an \textit{activation function}, $W$ a matrix of weights and $b$ a bias. A vector of $n$ numbers 
are fed into the network from the \textit{input layer} and propagated layer by layer through the internal \emph{hidden layers} after being multiplied by the weights on the edges, summed at the successor neurons with the bias and then computed by the neurons using the activation functions. 
The neurons on the \textit{output layer} compute the probabilities of classifying  an input vector to the labels represented by the corresponding neurons. The vector can be an image, a sentence, a voice, or a system state, depending on the application domains of the networks.

Given an $l$-layer neural network, let $k_i$ be the number of the neurons on the $i$-th layer, $W^{(i)}$ be the matrix of weights between the $i$-th and $(i+1)$-th layers, and $b^{(i)}$ the biases on the corresponding neurons, where $i=1,\ldots,l-1$. The function $F:\mathbb{R}^n \rightarrow \mathbb{R}^{m}$ implemented by the network can be defined by:  
\begin{align}
	&F(x)=W^{(l-1)}\sigma(z^{(l-1)}(x)),\tag{Network Function} \\
	\text{where},~&z^{(i)}(x) = W^{(i)} \sigma(z^{(i-1)}(x)) + b^{(i)}\tag{Layer Function}
\end{align}
for $i=2,\ldots,l-1$. For the sake of 
simplicity, we use $\hat{z}^{(i)}(x)$ to denote $\sigma(z^{(i)}(x))$ and $\Phi(x)= \arg \max_{\ell\in L} F(x)$ to denote the label $\ell$ such that the probability $F_{\ell}(x)$ of classifying $x$ to $\ell$ is larger than those to other labels, where $L$ represents the set of labels. 
The activation function $\sigma$ usually can be a Rectified Linear Unit (ReLU), $\sigma(x)=max(x,0)$), a Sigmoid function $\sigma(x)=\frac{1}{1+e^{-x}}$, a Tanh function $\sigma(x) = \frac{e^x - e^{-x}}{e^x + e^{-x}}$, and an Arctan function $\sigma(x) = tan^{-1}(x)$. In this work, we focus on the networks with S-curve activation functions, i.e., Sigmoid, Tanh, and Arctan, because they heavily rely on approximations.


Given a training dataset, the task of training a DNN is to fine-tune the weights and biases so that after the trained DNN achieves desired precision on test sets. 
Although a DNN is a precise mathematical function, its correctness is very challenging to guarantee due to the lack of formal specifications and the inexplicability of neural networks. Unlike programmer-composed programs, the machine-trained models are almost impossible to assign semantics to the internal computations.



\begin{figure}
	\begin{center}
		\includegraphics[width=0.46\textwidth]{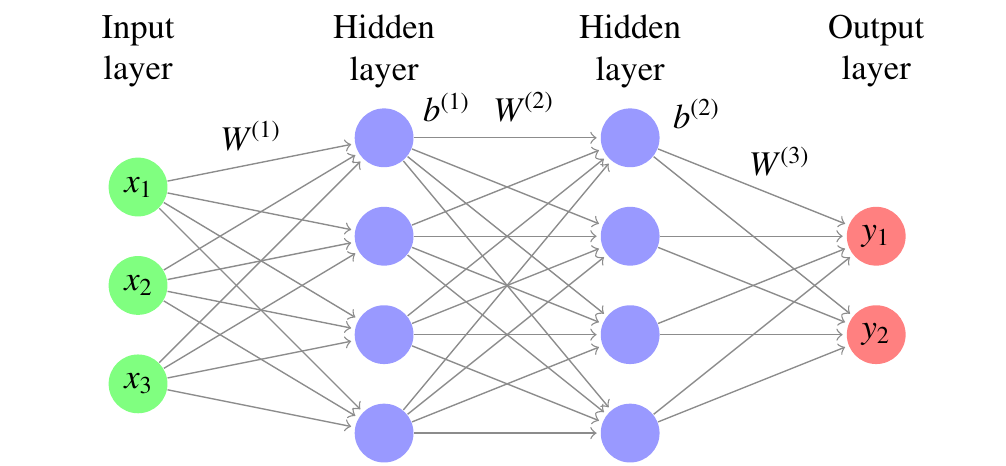}
	\end{center}
	\vspace{-2mm}
	\caption{A 4-layer DNN with two hidden layers.}
	\label{fig:dnn}
	\vspace{-4mm}
\end{figure}

\subsection{Robustness and Robustness Verification}
Despite the challenge in verifying the correctness of DNNs, formal verification is still useful to verify their safety-critical properties. One of the most important properties is  \emph{robustness}, stating that a neural network can classify an input correctly even if the input is manipulated under a reasonable range.

\begin{definition}[Local Robustness]\label{robust_def}
	A neural network $F:\mathbb{R}^n \rightarrow \mathbb{R}^{m}$ is called \textit{locally robust} with respect to an input $x_0$ and an input region $\Omega$ around $x_0$ if $\forall x \in \Omega, \Phi(x) = \Phi(x_0)$ holds.
\end{definition}

Usually, input region $\Omega$ around input $x_0$ is defined by a $\ell_p$-norm ball around $x_0$ with a radius of $\epsilon$, i.e. $\mathbb{B}_p (x_0, \epsilon):={x | \| x-x_0\|_p \le \epsilon}$. In this paper, we focus on the infinity norm and verify the robustness of the neural network in $\mathbb{B}_\infty (x_0, \epsilon)$. A corresponding robust verification problem is to compute the largest $\epsilon_0$ s.t. neural network $F$ is robust in $\mathbb{B}_\infty (x_0, \epsilon_0)$. The largest $\epsilon$ is called a \textit{certified lower bound}, which is a metric for measuring both the robustness of neural networks and the precision of robustness verification approaches. 

Assuming that the output label of $x_0$ is $c$, i.e. $\Phi(x_0) = c$, proving $F$'s robustness in Definition \ref{robust_def} is equivalent to proving that: $\forall x \in \Omega \forall \ell\in L/\{c\}.F_c(x) - F_\ell(x) > 0$ holds. 
Without the loss of generality, in this paper we fix some $\ell$ in $ L/\{c\}$ and 
verify that an input $x_0$ cannot be misclassified to label $\ell$ when $x_0$ is manipulated in the region $\Omega$. The verification problem is equivalent to solving the following optimization problem:
\begin{equation}\label{verify_problem}
	\min F_c(x) - F_{\ell}(x)
\end{equation}
We can conclude that $F$ is robust in $\Omega$ if the result is positive.  Otherwise, 
there exists some input $x'$ in $\Omega$ such that $F_\ell(x')\geq F_c(x')$. 
Namely, the probability of classifying $x'$ by $F$ to $\ell$ is greater than or equal to the one to $c$, and consequently, $x'$ 
may be classified as $\ell$ or $c$, meaning that $F$ is not robust in $\Omega$.

\begin{figure}
	\begin{center}
		\includegraphics[width=0.48\textwidth]{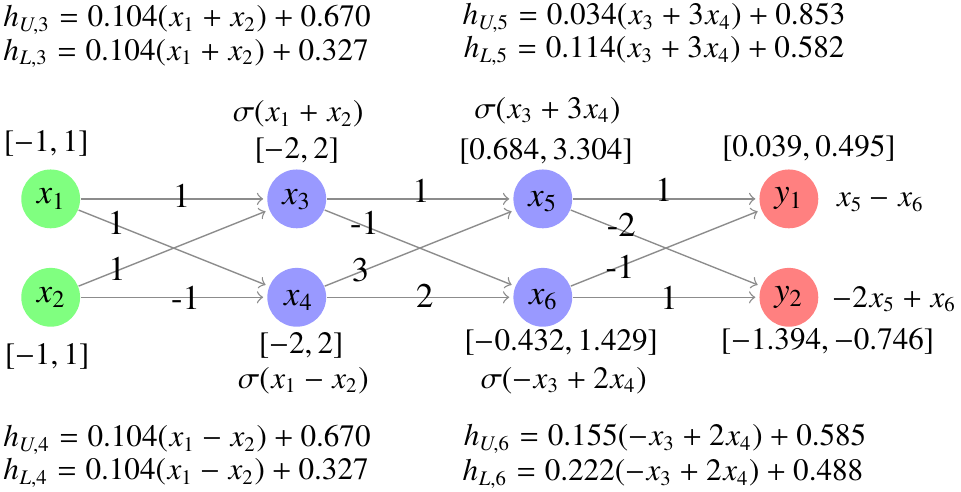}
		\caption{Verifying the robustness of a 4-layer network containing Sigmoid activation functions by over-approximations.}
		\label{approximation_eg}
	\end{center}
\vspace{-7mm}
\end{figure}


\subsection{Robustness Verification using Linear Over Approximation}
The optimization problem \ref{verify_problem} is computationally expensive, and it is almost impractical to compute the precise solution. The root reason for the high computational complexity of the problem is the non-linearity of the activation function $\sigma$. Even when $\sigma$ is piece-wise linear, e.g., the commonly used 
ReLU ($\sigma(x)=\max(x,0)$), the problem is still NP-complete \cite{katz2017reluplex}.



A practical solution to simplify the verification problem is to over-approximate $\sigma$ by linear constraints and transform it into an efficiently-solvable linear programming problem.


\begin{definition}[Linear Over-Approximation]\label{linear_bound_def}
Let $\sigma(x)$ be a non-linear function on $[l,u]$, there exist, $\alpha_L, \alpha_U, \beta_L, \beta_U \in \mathbb{R}$ \textit{s.t.}
	\begin{align*}
		& h_U(x) = \alpha_U x + \beta_U, \quad
		h_L(x) = \alpha_L x + \beta_L, and\\
		& \forall x \in [l,u], \quad h_L(x) \le \sigma(x) \le h_U(x).
	\end{align*}
	$h_L(x)$ and $h_U(x)$ are called the lower and upper linear bounds of $\sigma(x)$ on $[l,u]$, respectively, and $[l,u]$ is called the approximation domain of $\sigma(x)$ with respect to $h_L$ and $h_U$. 
\end{definition}

Using the lower and upper linear bounds defined in Definition \ref{linear_bound_def}, we can simplify  Problem (\ref{verify_problem}) to be the following efficiently solvable  linear optimization problem:
\begin{equation}\label{over-approx}
	\begin{split}
		\min & f(x) := F_c(x) - F_{t_0}(x) = z_c^{(m)}(x) - z_{t_0}^{(m)}(x) \\
		s.t. \quad & z^{(i)}(x) = W^{(i)} \hat{z}^{(i-1)}(x) + b^{(i)}, i \in {1,...,m}\\
		& h_L^{(i)}(x) \le \hat{z}^{(i)}(x) \le h_U^{(i)}(x), i \in {1,...,m-1}\\
		& x \in \Omega, t_0 \in L/c
	\end{split}
\end{equation}


\begin{example}\label{exm:1}
We consider an example in Figure \ref{approximation_eg}, which shows the verification of a simple neural network based on linear approximation. It is a fully-connected neural network with two hidden layers, $x_1, x_2 $ are input neurons and $y_1, y_2$ are output neurons. The intervals  represent the range of neurons before the application of the activation function. We conduct linear bounds for each neuron with an activation function using the information of intervals. $h_{U,i}$ and $h_{L,i}$ are the upper and lower linear bounds of $\mathit{\sigma}(x_i)$ respectively. From the computed intervals of output neuron, we have $y_1-y_2 > 0$ for all the possible $(x_1,x_2)$ in the input domain $[-1,1]\times [-1,1]$. As a result, we can conclude that the network is robust in the input domain with respect to the class corresponding to $y_1$.
\end{example}

The approximations inevitably introduce the overestimation of output ranges. 
In Example \ref{exm:1}, the real output range of $y_1$ is $[0.127, 0.392]$, which is computed by solving the optimization problems of minimizing and maximizing $y_1$, respectively. 
The one computed by over-approximation is $[0.039,0.495]$. We use the length difference of intervals to measure the overestimation. 
The overestimation can be as large as 72.08\%, even for a simple network.

The overestimation introduced by over-approximations may cause an actually robust case that cannot be verified, which is known as \emph{incomplete}. 
For instance, when we have $y_1-y_2<0$ by solving Problem \ref{over-approx}, there are two possible reasons. One is that there  exists some input such that the output on $y_1$ is indeed less than the one on $y_2$. The other is that the overestimation of $y_1$ and $y_2$ causes inequality. The network is robust in the latter case and not in the former. In this case, the algorithms just simply report \emph{unknown} because they cannot determine which reason causes it.

Note that an activation function can be approximated more tightly using two more pieces of linear bounds. However, 
the one-piece linear approximation in Definition \ref{linear_bound_def} is the most efficient because the reduced problem is a linear programming problem that can be efficiently solved in polynomial time. For piece-wise approximations, the number of linear bounds drastically blows up, and the corresponding reduced problem has been proved to be NP-complete \cite{singh2019abstract}.

\subsection{Approximation Tightness}
Reducing the overestimation of approximations is the key to reducing failure cases. The precision of approximations are characterized by the notion of \emph{tightness} \cite{2208.09872}. Many efforts have been made to define the tightest possible approximations. The tightness definitions can be classified into \textit{neuron-wise} and \textit{network-wise} categories.

\subsubsection{Neuron-wise Tightness}	
The tightness of activation functions' approximations can be measured independently. Given two upper bounds $h_{U}(x)$ and $h'_{U}(x)$ of activation function $\sigma(x)$ on the interval $[l,u]$, $h_{U}(x)$ is apparently tighter than $h'_{U}(x)$ if $h_{U}(x)<h'_{U}(x)$ for any $x$ in $[l,u]$ \cite{lyu2020fastened}. However, when $h_{U}(x)$ and $h'_{U}(x)$ intersect between $[l,u]$, their tightness becomes non-comparable. 
Another neuron-wise tightness metric is the area size of the gap between the bound and the activation function, i.e., $\int_{l}^{u}(h_{U}(x)-\sigma(x))dx$. 
A smaller area implies a tighter approximation \cite{HenriksenL20,muller2022prima}. 
However, another metric is the output range of the linear bounds. 
An approximation is considered to be \textit{the tightest} if it preserves the same output range as the activation function \cite{2208.09872}. 

\subsubsection{Network-wise Tightness}
Recent studies have shown that neuron-wise tightness does not always guarantee that the compound of all the approximations of the activation functions in a network is tight too \cite{2208.09872}. 
This finding explains why the approaches based on the neuron-wise tightness metrics achieve the best verification results only for certain networks. This finding inspires new approximation approaches that consider multiple and even all the activation functions in a network to approximate simultaneously. The compound of all the activation functions' approximations is called the network-wise tightest with respect to an output neuron if the neuron's output range is the precisest. 

Unfortunately, finding the  network-wise tightest approximation has been proved a non-convex optimization problem, and thus computationally expensive \cite{lyu2020fastened,2208.09872}. From a pragmatic viewpoint, a neuron-wise tight approximation is useful if all the neurons' composition is also network-wise tight.  The work \cite{2208.09872} shows that there exists such an neuron-wise tight approach under certain constraints when the networks are monotonic. That approach does not guarantee to be  optimal when the neural networks contain both positive and negative weights.

    	\section{Approximation Domain and Dual Overestimation Effects}\label{sec:inter}
In this section, we show that there is another important factor, i.e., the \emph{approximation domain} of activation functions, in defining  neuron-wise tight approximations. We show that a more precise approximation domain is important to define a tighter approximation of an activation function. Based on this finding, we further identify that the over-approximations of activation functions have dual overestimation effects resulting in imprecise verification results. 



\subsection{Approximation Domain: A Missing Factor}
Although an activation function $\sigma(x)$ is defined on the entire real number set, its inputs only come from a small segment. That is because all the perturbed inputs to the hosting neural network are bounded, and the network is Lipschitz-continuous \cite{lee2020lipschitz,combettes2020lipschitz}. We use $[l,u]$ to denote the real input interval of $\sigma(x)$. As shown in Figure \ref{approximation_eg}, the real input intervals of the activation functions on the neurons $x_3,x_4$ are $[-2,2]$. When approximating these functions, we only need to consider their real input intervals. Unfortunately, computing  the real input interval of $\sigma(x)$ is computationally expensive, at least as difficult as the original robustness verification problem. It is essentially an optimization problem. 
\begin{definition}\label{actual_domain_def}
	Given a neural network $F$ and an input region 
	$\mathbb{B}_\infty(x_0, \epsilon)$, the actual domain of the $r$-th hidden neuron in the $i$-th layer is $[l_r^{(i)}, u_r^{(i)}]$, where, 
	\begin{equation*}
		\begin{split}
			l_r^{(i)} =&  \min z_r^{(i)}(x), u_r^{(i)} = \max  z_r^{(i)}(x)\\
			s.t. \quad & z^{(j)}(x) = W^{(j)}\hat{z}^{j-1}(x) + b^{(j)}, j \in {1,...,i}\\
			& \hat{z}^{(j)}(x) = \sigma(z^{(j)}(x)), j \in {1,...,i-1}\\
			& x \in \mathbb{B}_\infty(x_0, \epsilon).
		\end{split}
	\end{equation*} 
\end{definition}

To avoid the complexity of computing actual domains of activation  functions, almost all the existing approaches define approximations on  estimated domains, which should be overestimated in order to guarantee the soundness of the approximations. Let's consider the activation functions on neurons $x_5$ and $x_6$ in Figure \ref{approximation_eg} as an example. Their input intervals are estimated based on the approximations of $x_3$ and $x_4$ by solving the corresponding linear programming problems. The input intervals are  $[0.684,3.304]$ and $[-0.432,1.429]$, respectively. 
Compared with the real ones, they are overestimated, as shown in Table \ref{range_compare}. Apparently, an over-approximation defined on an overestimated input interval is also a valid over-approximation on the real input interval based on Definition \ref{linear_bound_def}. 



%
%
%
%
%
%
%
%
%
%
%
%
%

\begin{figure}
	\begin{center}
		\includegraphics[width=.28\textwidth]{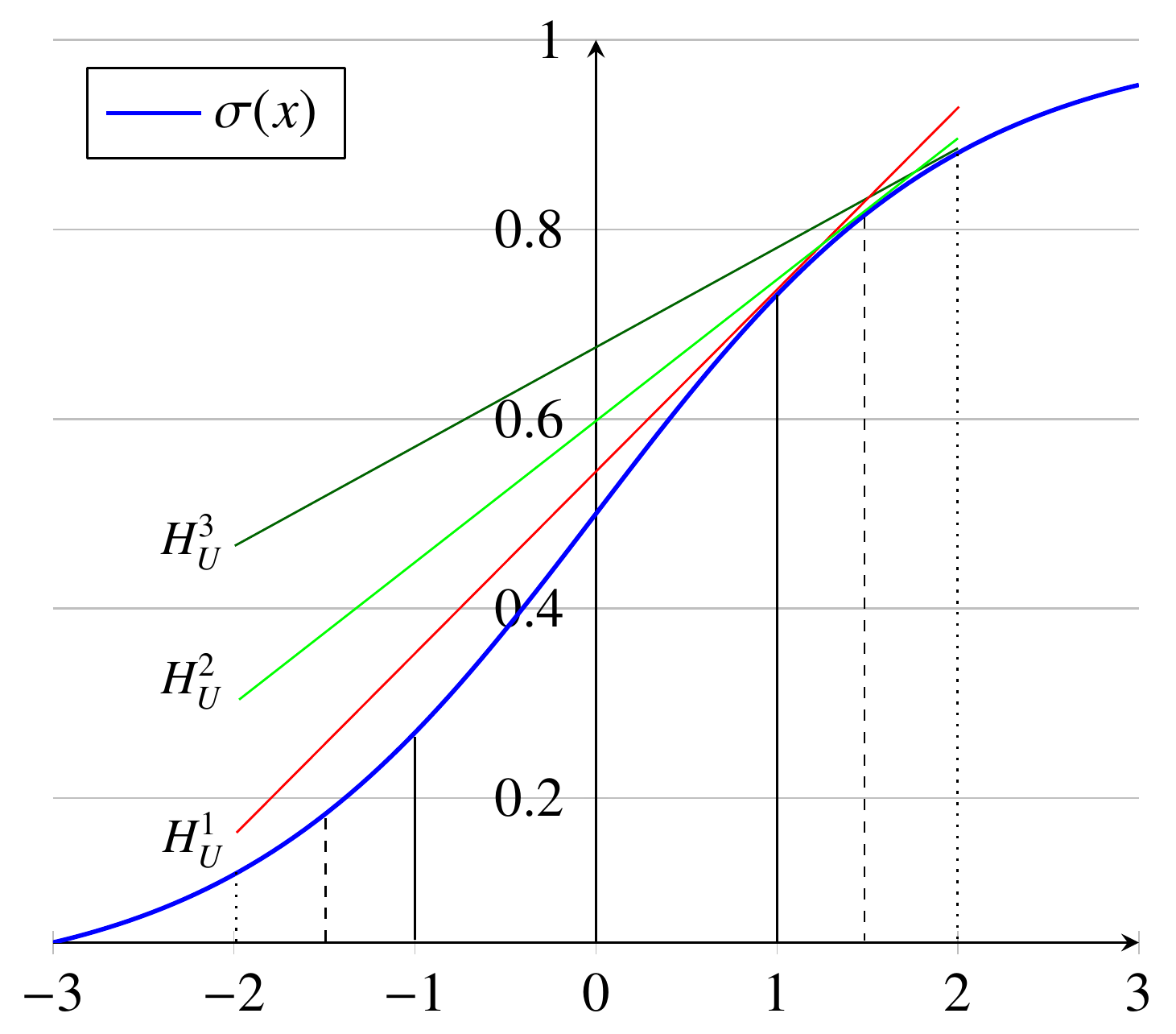}
		\caption{An example of defining the neuron-wise tightest linear upper bounds on different  approximation domains.}
		\label{fig:domain}
	\end{center}
\vspace{-7mm}
\end{figure}

The overestimation of approximation domains is the major factor in defining non-tight over-approximations. More overestimation results in less tightness. Figure \ref{fig:domain} depicts an example. We assume that the actual  domain of $\sigma(x)$ is $[-1,1]$ and the tangent line $H^1_{U}(x)$ at $(1,\sigma(1))$ is the tightest upper bound. Let $H^2_{U}(x)$ and 
$H^3_{U}(x)$ be the tightest linear upper bounds on the domains  $[-1.5,1.5]$ and $[-2,2]$, respectively. Apparently, $H^2_{U}(x)$ is tighter than $H^3_{U}(x)$ on the domain $[-1,1]$ because $H^2_{U}(x)<H^3_{U}(x)$ for all $x$ in $[-1,1]$.

Next, we formally prove that a precise approximation domain is necessary to define tight over-approximations. 



\begin{theorem}\label{thm:1}
	Given two overestimated approximation domains $[l_{\mathit{over}},u_{\mathit{over}}]$ and $[l'_{\mathit{over}},u'_{\mathit{over}}]$ satisfying  $l'_{\mathit{over}} < l_{\mathit{over}} \le l, u \le u_{\mathit{over}} < u'_{\mathit{over}}]$, For any pair of linear bounds $(h'_L(x), h'_U(x))$ of continuous function $\sigma$ defined on $[l'_{\mathit{over}},u'_{\mathit{over}}]$, there exists a pair of linear bounds $(h_L(x), h_U(x))$  defined on $[l_{\mathit{over}},u_{\mathit{over}}]$ satisfying   $\forall x \in [l,u], h'_L(x) \le h_L(x), h'_U(x) \ge h_U(x)$.
\end{theorem}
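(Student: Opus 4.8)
The plan is to reduce the statement to an elementary monotonicity property of the set of admissible linear bounds under shrinking of the domain. First I would make explicit what it means, by Definition~\ref{linear_bound_def}, for $(h'_L,h'_U)$ to be a pair of linear bounds of $\sigma$ on $[l'_{\mathit{over}},u'_{\mathit{over}}]$: it is precisely the pair of pointwise constraints $h'_L(x)\le\sigma(x)\le h'_U(x)$ holding for every $x$ in that interval. The crucial observation is that soundness of a linear bound is a purely pointwise requirement, hence it is automatically inherited by every sub-interval. Since the hypotheses $l'_{\mathit{over}}<l_{\mathit{over}}\le l$ and $u\le u_{\mathit{over}}<u'_{\mathit{over}}$ yield the chain of inclusions $[l,u]\subseteq[l_{\mathit{over}},u_{\mathit{over}}]\subseteq[l'_{\mathit{over}},u'_{\mathit{over}}]$, any affine function that sandwiches $\sigma$ on the outer interval also sandwiches it on the middle one.

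Concretely, I would let $\mathcal{U}(D)$ (resp.\ $\mathcal{L}(D)$) denote the set of affine functions lying weakly above (resp.\ below) $\sigma$ on a domain $D$, and record the elementary fact that $D\subseteq D'$ implies $\mathcal{U}(D')\subseteq\mathcal{U}(D)$ and $\mathcal{L}(D')\subseteq\mathcal{L}(D)$: shrinking the domain only drops constraints and therefore enlarges the feasible set. Instantiating this with $D=[l_{\mathit{over}},u_{\mathit{over}}]$ and $D'=[l'_{\mathit{over}},u'_{\mathit{over}}]$ shows that the given pair $(h'_L,h'_U)$ is itself an admissible pair of linear bounds on the middle domain $[l_{\mathit{over}},u_{\mathit{over}}]$. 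Taking $h_L:=h'_L$ and $h_U:=h'_U$ therefore already exhibits a witness satisfying $h'_L(x)\le h_L(x)$ and $h'_U(x)\ge h_U(x)$ on all of $[l,u]$ (with equality), which discharges the existence claim as stated.

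To make the result carry its intended force --- that the more precise domain genuinely \emph{permits} a tighter approximation rather than merely an equally tight one --- I would then argue that the pointwise-tightest element of the enlarged feasible set $\mathcal{U}([l_{\mathit{over}},u_{\mathit{over}})]$ can only lie below the given $h'_U$ on $[l,u]$, and dually for the lower bound. This is where the shape of $\sigma$ would enter: for the S-curve activations of interest one exhibits the explicit tighter witness (the relevant endpoint tangent or secant line on $[l_{\mathit{over}},u_{\mathit{over}}]$, as illustrated by $H^2_U$ versus $H^3_U$ in Figure~\ref{fig:domain}) and checks that it remains sound on the full middle interval while dropping strictly below $h'_U$ on $[l,u]$. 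I expect the only real subtlety to be bookkeeping rather than mathematics: one must verify that the restricted bounds still satisfy Definition~\ref{linear_bound_def} on the \emph{entire} middle domain $[l_{\mathit{over}},u_{\mathit{over}}]$ and not merely on $[l,u]$, and be explicit that for a general continuous $\sigma$ only the weak inequalities are guaranteed, since a domain restriction cannot improve on a bound that already coincides with $\sigma$.
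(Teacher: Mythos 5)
Your proposal is correct for the theorem as literally stated, but it takes a genuinely different route from the paper's. You observe that soundness of a linear bound is a pointwise condition, so the given pair $(h'_L,h'_U)$ restricts to a valid pair on $[l_{\mathit{over}},u_{\mathit{over}}]$ and already witnesses the stated (non-strict) inequalities with equality; existence follows at once. The paper instead builds a shifted witness: it sets $\delta_u := \max_{x\in[l'_{\mathit{over}},u'_{\mathit{over}}]}\sigma(x)-\max_{x\in[l_{\mathit{over}},u_{\mathit{over}}]}\sigma(x)\ge 0$ and takes $h_U:=h'_U-\delta_u$ (dually $h_L:=h'_L+\delta_l$), which lies strictly below $h'_U$ whenever the two domains have different ranges --- i.e., it aims at the stronger, non-vacuous reading that you correctly identify as the intended one. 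Your approach buys unimpeachable soundness at the price of a trivial conclusion; the paper's buys a strict improvement, but its soundness step reduces to requiring $h'_U(x)\ge\max_{x'\in[l'_{\mathit{over}},u'_{\mathit{over}}]}\sigma(x')$ for every $x$ in the smaller domain, which holds for a constant bound at the maximum but fails for the tangent-type bounds actually used: if $h'_U$ touches $\sigma$ at some $x_0\in[l_{\mathit{over}},u_{\mathit{over}}]$ and $\delta_u>0$, then $h_U(x_0)=\sigma(x_0)-\delta_u<\sigma(x_0)$, so the shifted line is not a sound upper bound. Your closing remark --- that a genuinely tighter sound witness must exploit the shape of $\sigma$ (endpoint tangents or secants on the smaller domain, as with $H^2_U$ versus $H^3_U$ in Figure~\ref{fig:domain}) rather than a uniform vertical shift --- is therefore exactly the ingredient needed to make the strengthened claim go through, and is worth developing into the actual proof.
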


\begin{proof}
    Since we have $l'_{\mathit{over}} < l_{\mathit{over}}$ and $u_{\mathit{over}} < u'_{\mathit{over}}$, which means $[l_{\mathit{over}},u_{\mathit{over}}] \subset [l'_{\mathit{over}},u'_{\mathit{over}}]$, and $\sigma$ is continuous function, we can conclude that the range of $\sigma$ satisfies: $\sigma([l_{\mathit{over}},u_{\mathit{over}}]) \subset \sigma([l'_{\mathit{over}},u'_{\mathit{over}}])$. So, we have $\max\limits_{x \in [l'_{\mathit{over}},u'_{\mathit{over}}]} \sigma(x) \ge \max\limits_{x \in [l_{\mathit{over}},u_{\mathit{over}}]} \sigma(x)$. Let $\delta_u := \max\limits_{x \in [l'_{\mathit{over}},u'_{\mathit{over}}]} \sigma(x) - \max\limits_{x \in [l_{\mathit{over}},u_{\mathit{over}}]} \sigma(x)$. We have $\delta_u \ge 0$. Then, we define a linear bound $h_U(x)$ on $[l_{\mathit{over}},u_{\mathit{over}}]$ as $h_U(x) = h'_U(x) - \delta_u$.
    For all $x$ in $[l,u]$, there is $h'_U(x) \ge h_U(x)$. 
    
    Next, we show that $h_U(x)$ is a sound upper bound for $\sigma$ on $[l_{\mathit{over}},u_{\mathit{over}}]$. That is, for all $x$ in $ [l_{\mathit{over}},u_{\mathit{over}}]$, we have: 
    \begin{align*}
         h_U(x) - \sigma(x) 
        &\ge h_U(x) - \max\limits_{x' \in [l_{\mathit{over}},u_{\mathit{over}}]} \sigma(x') \\
        &=  h_U(x) - \max\limits_{x' \in [l'_{\mathit{over}},u'_{\mathit{over}}]} \sigma(x')\\
        &~~~ + (\max\limits_{x' \in [l'_{\mathit{over}},u'_{\mathit{over}}]} \sigma(x') - \max\limits_{x' \in [l_{\mathit{over}},u_{\mathit{over}}]} \sigma(x'))\\
        &=  h_U(x) + \delta_u - \max\limits_{x' \in [l'_{\mathit{over}},u'_{\mathit{over}}]} \sigma(x')\\
        &=  h'_U(x) - \max\limits_{x' \in [l'_{\mathit{over}},u'_{\mathit{over}}]} \sigma(x') \ge 0.
    \end{align*}
    By Definition \ref{linear_bound_def}, $h_U(x)$ is a sound upper bound. Likewise, let $\delta_l := \min\limits_{x \in [l_{\mathit{over}},u_{\mathit{over}}]} \sigma(x) - \min\limits_{x \in [l'_{\mathit{over}},u'_{\mathit{over}}]} \sigma(x)$, then $h_L(x) := h'_L(x) + \delta_l$ is an eligible sound lower bound for $\sigma$ on $[l_{\mathit{over}},u_{\mathit{over}}]$.
\end{proof}

Theorem \ref{thm:1} indicates that the closer $[l_{\mathit{over}},u_{\mathit{over}}]$ gets to $[l,u]$, 
the tighter the approximation linear bounds can be defined. 
Therefore, we can define tight approximations by reducing the overestimation of the approximation domains.  


\begin{figure}
	\begin{center}
		\includegraphics[width=0.48\textwidth]{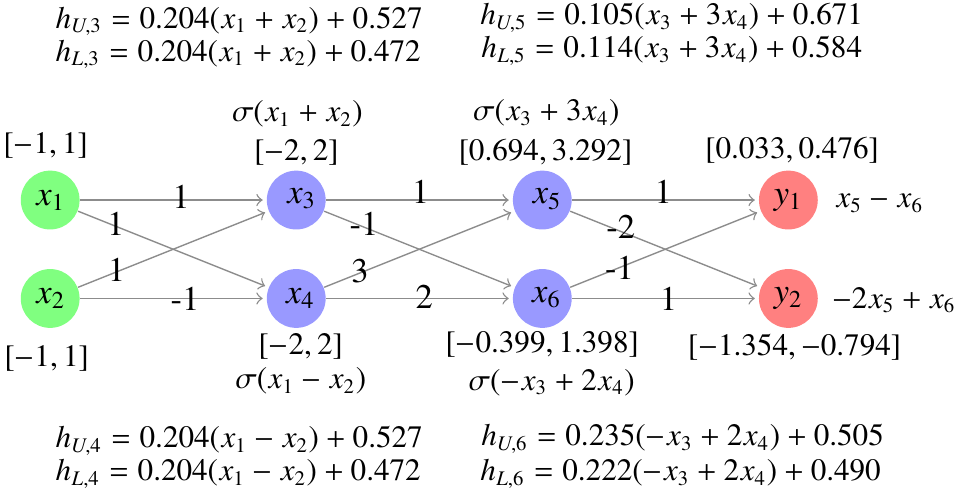}
		\caption{The tighter approximations for the activation functions in the same network in Figure \ref{approximation_eg}.}
		\label{approx_eg_tighter}
	\end{center}
	\vspace{-7mm}
\end{figure}

\vspace{-1mm}

\subsection{The Dual Overestimation Effects}
Overestimated approximation domains can be efficiently computed using layer-by-layer symbolic interval propagation \cite{dutta2018output, xiang2018output, wang2018formal}.
Specifically, we can overestimate an input domain $[l^{(i)}_{r,over},u^{(i)}_{r,over}]$ for the activation function $\sigma^{(i)}_r(x)$ on the $r$-th neuron in the $i$-the hidden layer by over-approximating each $\sigma(x)$ in Definition \ref{actual_domain_def} and transforming it into a linear optimization problem. 
Apparently, overestimation is introduced to the input domains of the activation functions. 
The left part of Table \ref{range_compare} shows the overestimation  
in Example \ref{exm:1}, where the overestimation  becomes larger and larger layer by layer.

According to Theorem \ref{thm:1}, the large overestimation of approximation domains would result in non-tight over-approximations to activation functions, which further leads to a larger overestimation of the approximation domains for the following activation functions. We call this phenomenon \emph{dual overestimation effects} of over-approximations.  Apparently, we need a tighter over-approximation of each $\sigma(x)$ to compute  a less overestimated input domain of $\sigma^{(i)}_r(x)$. 


%

\begin{proposition}\label{prop:right}
	A tighter over-approximation of an activation function yields less over-estimated approximation domains for its following activation functions. 
\end{proposition}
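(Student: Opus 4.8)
The plan is to reduce the claim to an elementary monotonicity property of the linear programs that define the approximation domains. Recall from Definition~\ref{actual_domain_def} that the overestimated domain of a downstream neuron is obtained from the relaxation of that optimization problem in which each exact constraint $\hat z^{(j)}(x)=\sigma(z^{(j)}(x))$ is replaced by the sandwiching inequalities $h^{(j)}_L(x)\le \hat z^{(j)}(x)\le h^{(j)}_U(x)$ of Definition~\ref{linear_bound_def}. I would fix the neuron whose approximation is tightened, say $\sigma^{(k)}_s$, hold all other neurons' bound functions fixed (same coefficients), and compare the feasible region $\mathcal F'$ of the relaxed program induced by a looser pair $(h'_{L,s},h'_{U,s})$ against the feasible region $\mathcal F$ induced by a tighter pair $(h_{L,s},h_{U,s})$. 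The goal is to show $\mathcal F\subseteq\mathcal F'$; the conclusion then follows because minimizing over a smaller set can only increase the optimum and maximizing can only decrease it.

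The key step is the feasible-set inclusion. Following the convention of Theorem~\ref{thm:1}, ``tighter'' means $h'_{L,s}\le h_{L,s}\le \sigma \le h_{U,s}\le h'_{U,s}$ pointwise on the approximation domain of $\sigma^{(k)}_s$. Since only the approximations of layers strictly before $k$ determine that domain, and those are held fixed, the pre-activation $z^{(k)}_s$ ranges over the same interval under both pairs, so the two bound families are defined on the identical domain. For every admissible value of $z^{(k)}_s$ the admissible interval for the post-activation satisfies $[h_{L,s}(z^{(k)}_s),h_{U,s}(z^{(k)}_s)]\subseteq [h'_{L,s}(z^{(k)}_s),h'_{U,s}(z^{(k)}_s)]$, while every other constraint of the program is unchanged. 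Hence any point satisfying the tighter program also satisfies the looser one, giving $\mathcal F\subseteq\mathcal F'$.

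Finally I would propagate this inclusion downstream. For a neuron in any layer $i>k$, the endpoints of its overestimated domain are the $\min$ and $\max$ of the linear form $z^{(i)}_r$ over the corresponding feasible region; since this region shrinks (the layer-$k$ coupling constraint is tighter and all others identical), the minimum is nondecreasing and the maximum is nonincreasing, so the computed interval is contained in the previous one, i.e. it is less overestimated. The same argument applies verbatim to every following layer, as each of their defining programs contains the tightened layer-$k$ constraint.

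The main obstacle I anticipate is precisely avoiding interval-arithmetic reasoning of the form ``smaller input range yields smaller output range,'' which fails when the weight matrices $W^{(k+1)},\dots$ contain both positive and negative entries, the very regime flagged at the end of Section~\ref{sec:prel}. The feasible-region inclusion argument sidesteps this entirely: the image of a linear map over a subset is contained in its image over the superset regardless of the signs of the weights, so the $\min$/$\max$ monotonicity holds unconditionally. A secondary point worth stating carefully is that the tightened bounds remain sound on each shrunken downstream domain (a sound bound on a larger domain stays sound on any subdomain), so the comparison is between two legitimate relaxations. If one additionally re-derives the tightest bounds on each new, smaller downstream domain, Theorem~\ref{thm:1} shows that the effect compounds layer by layer.
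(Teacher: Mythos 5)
Your argument is correct, and it supplies something the paper itself does not: Proposition~\ref{prop:right} is stated without a formal proof and is justified only by the informal appeal to Theorem~\ref{thm:1} (``a larger overestimated domain forces looser bounds'') together with the numerical evidence of Example~\ref{exm:2} and Table~\ref{range_compare}. Your route is genuinely different and more rigorous: you reduce the claim to monotonicity of the relaxed linear programs of Definition~\ref{actual_domain_def} under feasible-set inclusion, showing that pointwise-tighter bounds at one neuron shrink the feasible region of every downstream program while leaving the objective (a linear form) unchanged, so each downstream $\min$ can only increase and each $\max$ can only decrease. This buys two things the paper's informal discussion does not: it holds unconditionally for weight matrices of mixed sign (the regime where the paper concedes neuron-wise tightness arguments break down), and it cleanly separates the one-step effect (bounds held fixed elsewhere) from the compounding effect (re-deriving bounds on the shrunken domains, where Theorem~\ref{thm:1} applies). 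Two small points to make explicit if you write this up: first, your proof hinges on reading ``tighter'' as pointwise domination on the shared approximation domain, as in Theorem~\ref{thm:1}; under the area- or output-range-based tightness metrics the paper also surveys, the two pairs of bounds may cross and the feasible regions become incomparable, so the proposition as literally stated would not follow. Second, the conclusion is that the computed domain is \emph{contained} in the looser one, hence no more overestimated; strict improvement requires the tightened constraint to actually be active at an optimizer, which is consistent with the paper's empirical observation that the reduction can be small.
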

\vspace{-2mm}

\begin{example}\label{exm:2}
We revisit the network in Figure \ref{approximation_eg}. We define different over-approximations for the four activation functions as shown in Figure \ref{approx_eg_tighter} and compare the over-estimated input domains of $x_5,x_6$. 
Figure \ref{app-comp} visualizes the over-approximations, where the linear bounds in green are those in Figure \ref{approximation_eg} and the red ones represent those in Figure \ref{approx_eg_tighter}. 
The approximation on $x_5$ is similar to the one on $x_6$ and thus omitted due to space limit. 
In Figure \ref{approximation_eg}, we take the tangent lines at the two endpoints as the bounds \cite{2208.09872}, and in Figure \ref{approx_eg_tighter} we use the tangent lines across the endpoints \cite{wu2021tightening,HenriksenL20}. The overestimations are reduced by the tighter approximations defined in Figure \ref{approx_eg_tighter}. 

\begin{figure}
		\begin{subfigure}{0.24\textwidth}
		\includegraphics[width=\textwidth]{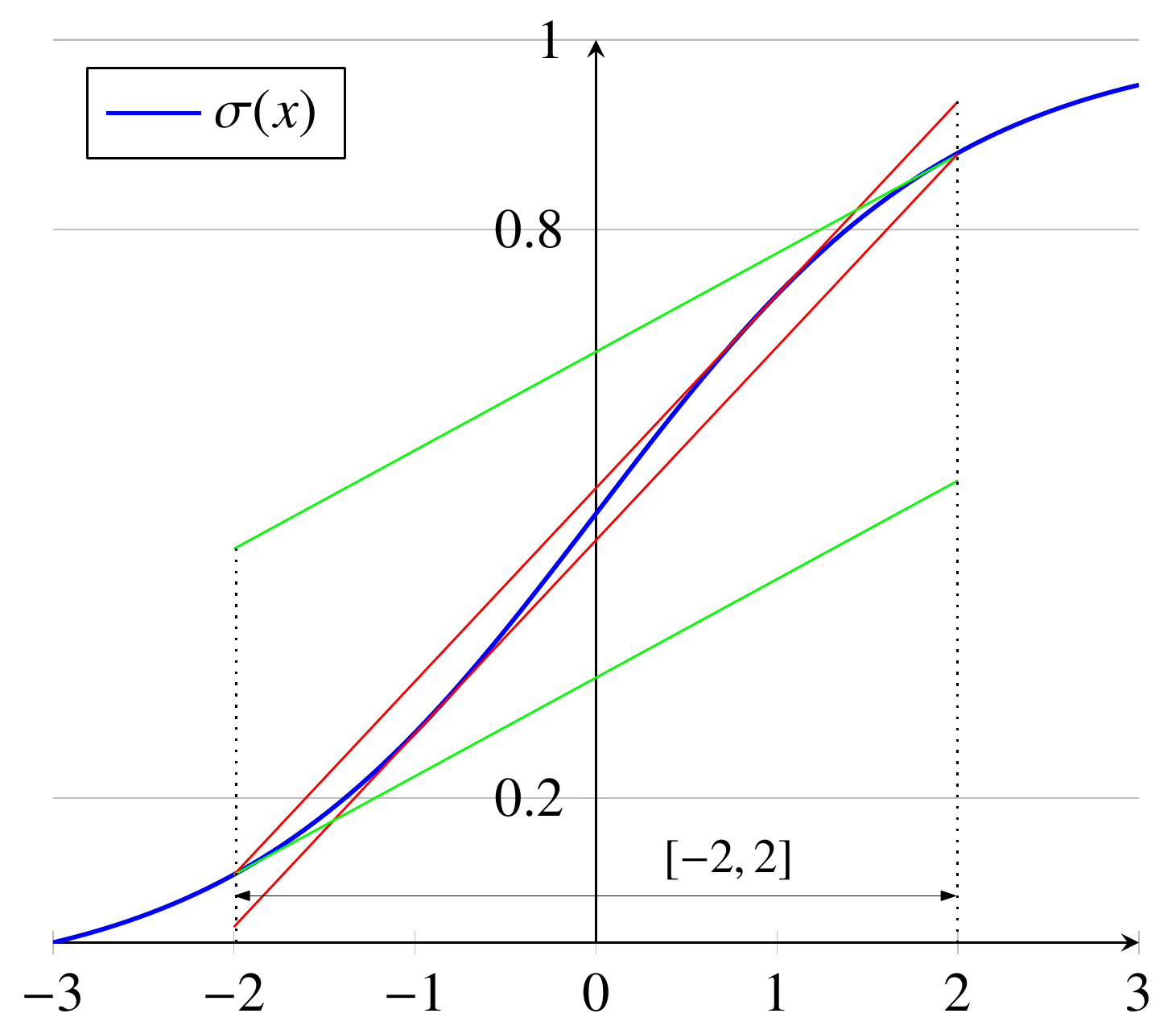}
		\caption{On neurons $x_3,x_4$.}
		\label{fig:app-x34}
	\end{subfigure}
		\begin{subfigure}{0.24\textwidth}
	\includegraphics[width=\textwidth]{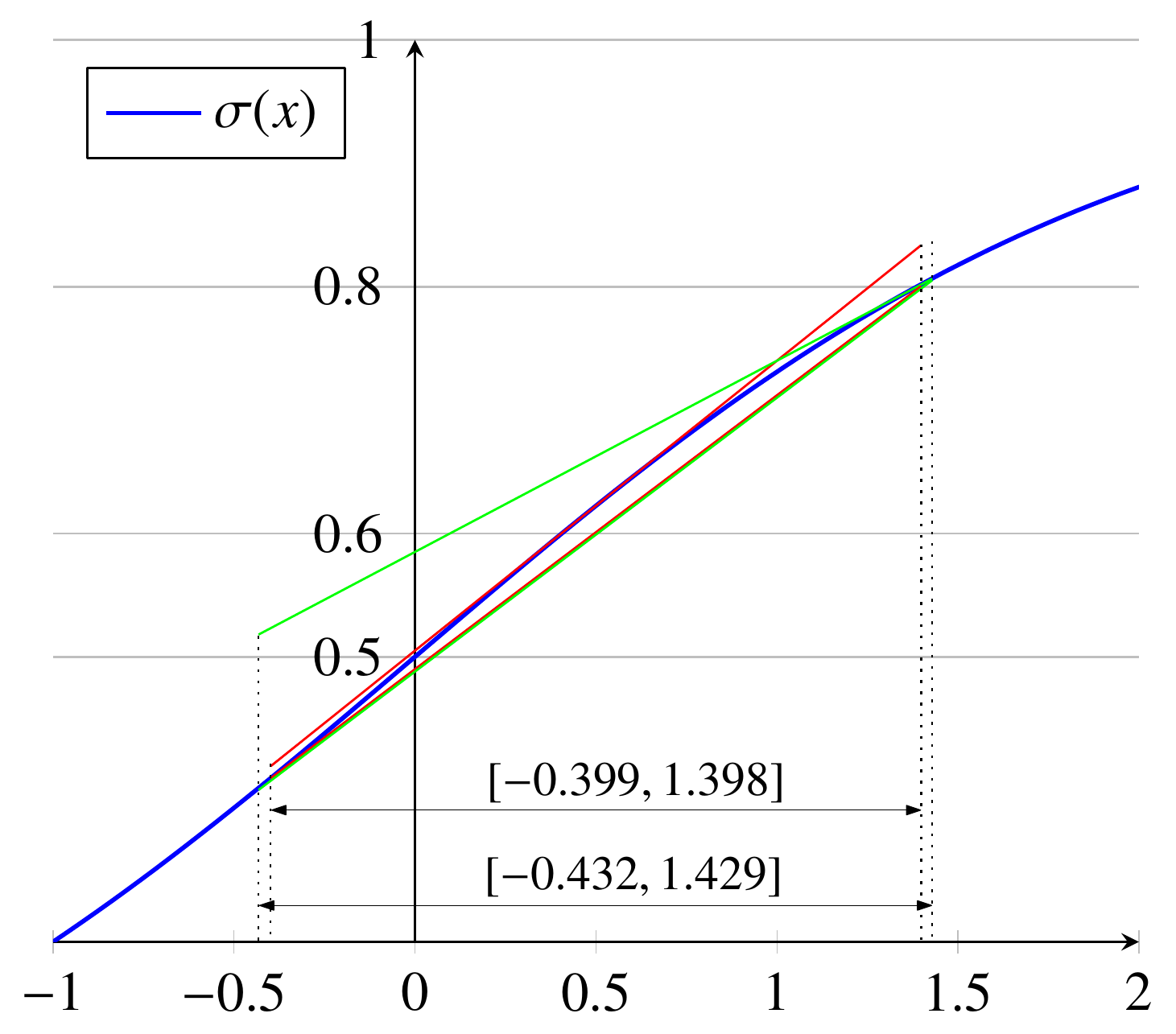}
	\caption{On neuron $x_6$.}
	\label{fig:app-x5}
\end{subfigure}
\vspace{-1mm}
\caption{Over-approximation comparisons in Figures \ref{approximation_eg} and \ref{approx_eg_tighter}.  }
\vspace{-1mm}
\label{app-comp}
\end{figure}


\end{example}

\begin{table}[]
	\centering
	\setlength{\tabcolsep}{2pt}	
	\caption{The overestimations that are introduced due to over-approximations.} 
	\vspace{-1mm}	
	\begin{tabular}{|l|r|r|r|r|r|}
		\hline
		&               & \multicolumn{2}{c|}{\textbf{Example 1}} & \multicolumn{2}{c|}{\textbf{Example 2}} \\\hline 
		 &\multicolumn{1}{c|}{\textbf{Real Domain}}    & \textbf{Overestimated}  & \textbf{Pct. (\%)} & \textbf{Overestimated}  & \textbf{Pct. (\%)} \\ \hline
		$x_3$ & [-2,2]&[-2,2]& 0&[-2,2] &0\\
		$x_4$ & [-2,2]&[-2,2]& 0&[-2,2] &0\\
		$x_5$ & {[}0.857, 3.142{]}   & {[}0.684, 3.304{]}   & 14.66   &{[}0.694, 3.292{]}&  13.70\\
		$x_6$ & {[}-0.266, 1.266{]}  & {[}-0.432, 1.429{]}  & 21.47   &{[}-0.399, 1.398{]}&  17.30\\
		$y_1$ & {[}0.127, 0.392{]}   & {[}0.039, 0.495{]}   & 72.08   &{[}0.033, 0.476{]}&  67.17\\
		$y_2$ & {[}-1.301, -0.924{]} & {[}-1.394, -0.746{]} & 71.88   &{[}-1.354, -0.794{]}&  48.54\\ \hline
	\end{tabular}
\vspace{-3mm}
	\label{range_compare}  
\end{table}

Example \ref{exm:2} shows that the over-approximation of activation functions has dual effects of causing the overestimation to increase drastically layer by layer. 
On the one hand, it causes overestimation of the output ranges of the neurons on the output layer. On the other hand, it overestimates the approximation domains of the activation functions following it.  
The overestimation of approximation domains further increases the overestimation when the corresponding activation functions are over-approximated on those domains. 


A solution to reducing the dual overestimation effects is to loosen the dependency between activation functions. One possible way of loosening the dependency is to define over-approximations based on some information that is independent of the approximations of predecessor activation functions, besides the overestimated approximation domain.  


\section{Under-Approximation Guided Over-Approximation}\label{sec:approx}
In this section, we propose an under-approximation-guided approach for defining tight over-approximations for activation functions. Specifically, for each activation function to approximate, we underestimate the approximation domain and leverage this information to guide the over-approximation. The underestimated domain is enclosed by the actual domain of the function. The underestimated domain guarantees its tightness, while the overestimated domain guarantees its soundness.

\subsection{Problem Definition}

Because computing the exact approximation domain is impractical and the overestimated approximation domain is not enough to define tight approximations, we consider introducing underestimated domains to approximate the exact ones and defining lower and upper bounds based on both of them. Specifically, given an S-curve function $\sigma(x)$, let $[l_{\mathit{over}},u_{\mathit{over}}]$ and $[l_{\mathit{under}},u_{\mathit{under}}]$ be an over-approximated domain and an under-approximated domain, respectively. That is, we have $l_{\mathit{over}}\leq l_x\leq l_{\mathit{under}}$ and $u_{\mathit{under}}\leq u_x\leq u_{\mathit{over}}$, where 
$l_x,u_x$ denote the lower and upper bounds of $x$. The problem is to define tight lower and upper linear bounds for $\sigma(x)$ on the domain $[l_x,u_x]$ based on 
$[l_{\mathit{under}},u_{\mathit{under}}]$ and $[l_{\mathit{over}},u_{\mathit{over}}]$. 

We take the case of the upper bound, for example.
\begin{theorem}\label{real_domain_tightness}
	For a monotonously increasing function $\sigma$ on $[l_{over},u_{over}]$ which is concave on $[u_x, u_{over}]$, an upper bound of $[l_x, u_x]$ defined by tangent line at the point $u_x$  must be tighter than the one at the point between $[u_x,u_{over}]$.  
\end{theorem}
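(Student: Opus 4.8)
The plan is to reduce the statement to a pointwise comparison of two tangent lines and then to order them using only the concavity of $\sigma$ on $[u_x, u_{over}]$. Denote by $L_{u_x}(x) = \sigma(u_x) + \sigma'(u_x)(x - u_x)$ the tangent at $u_x$ and by $L_t(x) = \sigma(t) + \sigma'(t)(x - t)$ the tangent at an arbitrary point $t \in (u_x, u_{over}]$. For upper bounds, ``tighter'' means pointwise smaller on the approximation domain, so the goal is to show $L_{u_x}(x) \le L_t(x)$ for every $x \in [l_x, u_x]$.

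First I would extract two facts from concavity on $[u_x, u_{over}]$. Because $\sigma$ is concave there, $\sigma'$ is non-increasing, hence $\sigma'(t) \le \sigma'(u_x)$ whenever $t \ge u_x$; and because a concave function lies below each of its tangents, $L_t(x) \ge \sigma(x)$ throughout $[u_x, u_{over}]$, which at the left endpoint gives $L_t(u_x) \ge \sigma(u_x) = L_{u_x}(u_x)$. Next I would consider the affine difference $g(x) := L_t(x) - L_{u_x}(x)$, whose slope is $\sigma'(t) - \sigma'(u_x) \le 0$, so $g$ is non-increasing. Combining $g(u_x) \ge 0$ with this monotonicity yields $g(x) \ge g(u_x) \ge 0$ for all $x \le u_x$, and in particular for all $x \in [l_x, u_x]$. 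This is exactly $L_{u_x}(x) \le L_t(x)$, proving the tangent at $u_x$ is the tighter upper bound; if $\sigma$ is strictly concave the inequality is strict for $x < u_x$ and $t > u_x$.

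The subtlety I expect to be the main obstacle is the mismatch between where concavity is assumed, namely $[u_x, u_{over}]$, and where tightness is compared, namely $[l_x, u_x]$. The argument must use concavity only at or to the right of $u_x$ --- the two facts above are evaluated precisely there --- and then propagate the resulting inequality leftward purely through the monotonicity of the affine map $g$, so no assumption on the shape of $\sigma$ over $[l_x, u_x]$ is needed. I would also flag that the statement tacitly treats each tangent line as a valid upper bound of $\sigma$ on $[l_x, u_x]$; I would record this as a standing hypothesis inherited from the S-curve geometry rather than attempt to derive soundness from the concavity assumption alone, since concavity on $[u_x, u_{over}]$ does not by itself control $\sigma$ on the left portion of the domain.
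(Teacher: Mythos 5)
Your proposal is correct and follows essentially the same route as the paper's proof: you compare the two tangent lines via their affine difference (your $g$ is just the negative of the paper's $T$), use concavity on $[u_x,u_{over}]$ to get both the sign of the difference's slope and its sign at $u_x$, and propagate the inequality leftward to $[l_x,u_x]$ by monotonicity. Your added remark that soundness of the tangents as upper bounds on $[l_x,u_x]$ is a separate hypothesis not derivable from concavity on $[u_x,u_{over}]$ alone is a fair observation the paper leaves implicit.
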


\begin{proof} 
	Suppose that we have $d \in [u_x,u_{over}]$, then the tangent line at $u_x$ is $y = \sigma '(u_x) x + \sigma(u_x) -\sigma '(u_x) u_x $ and the one at $d$ is $y = \sigma '(d) x + \sigma(d) -\sigma '(d) d $. For any $x\in [l_{under},u_{under}]$, we consider $T(x) = \sigma '(u_x) x + \sigma(u_x) -\sigma '(u_x) u_x -(\sigma '(d) x + \sigma(d) -\sigma '(d) d)$, $x \in [l_x,u_x]$. Since $\sigma$ is concave on $[u_x, u_{over}]$, we have $\forall d \in [u_x, u_{over}]$, $ \sigma '(u_x) \ge \sigma '(d)$. So $T'(x) = \sigma '(u_x) - \sigma '(d) \ge 0$, which means that $T(x)$ is monotonously increasing on $[l_x,u_x]$. $T(u_x)= \sigma(u_x) -(\sigma '(d) u_x + \sigma(d) -\sigma '(d) d) = \sigma(u_x) - \sigma(d) + \sigma'(d) (d - u_x) \le 0$, for $\sigma$ is concave on $[u_x, u_{over}]$. As a result, $T(x) \le T(u_{under}) \le 0$, which means the tangent line at any point between $[u_x,u_{over}]$ is over the one at $u_x$ on $[l_x,u_x]$.
\end{proof} 

From the definitions of tightness of upper and lower bounds in \cite{lyu2020fastened}, we can conclude that considering the real domain $[l_x, u_x]$, the tangent line at point $u_x$ is tighter. We can extend Theorem \ref{real_domain_tightness} to lower bounds of monotonously decreasing function $\sigma$ on $[l_{over},u_{over}]$ which is convex on $[l_{under}, l_x]$, as it is the symmetry of Theorem \ref{real_domain_tightness}. From these conclusions, we can deduce that the tangent lines at $l_x$ and $u_x$ can guide us to find tighter bounds on the real domain of $x$. However, due to the complexity of computing, the precise $[l_x, u_x]$ is hard to find. We can only approach $[l_x, u_x]$ by computing $[l_{under}, u_{under}]$, and use the under-approximation domain for guidance of approximation.

\begin{figure*}
	\centering
	\begin{subfigure}{0.24\textwidth}
		\includegraphics[width=\textwidth]{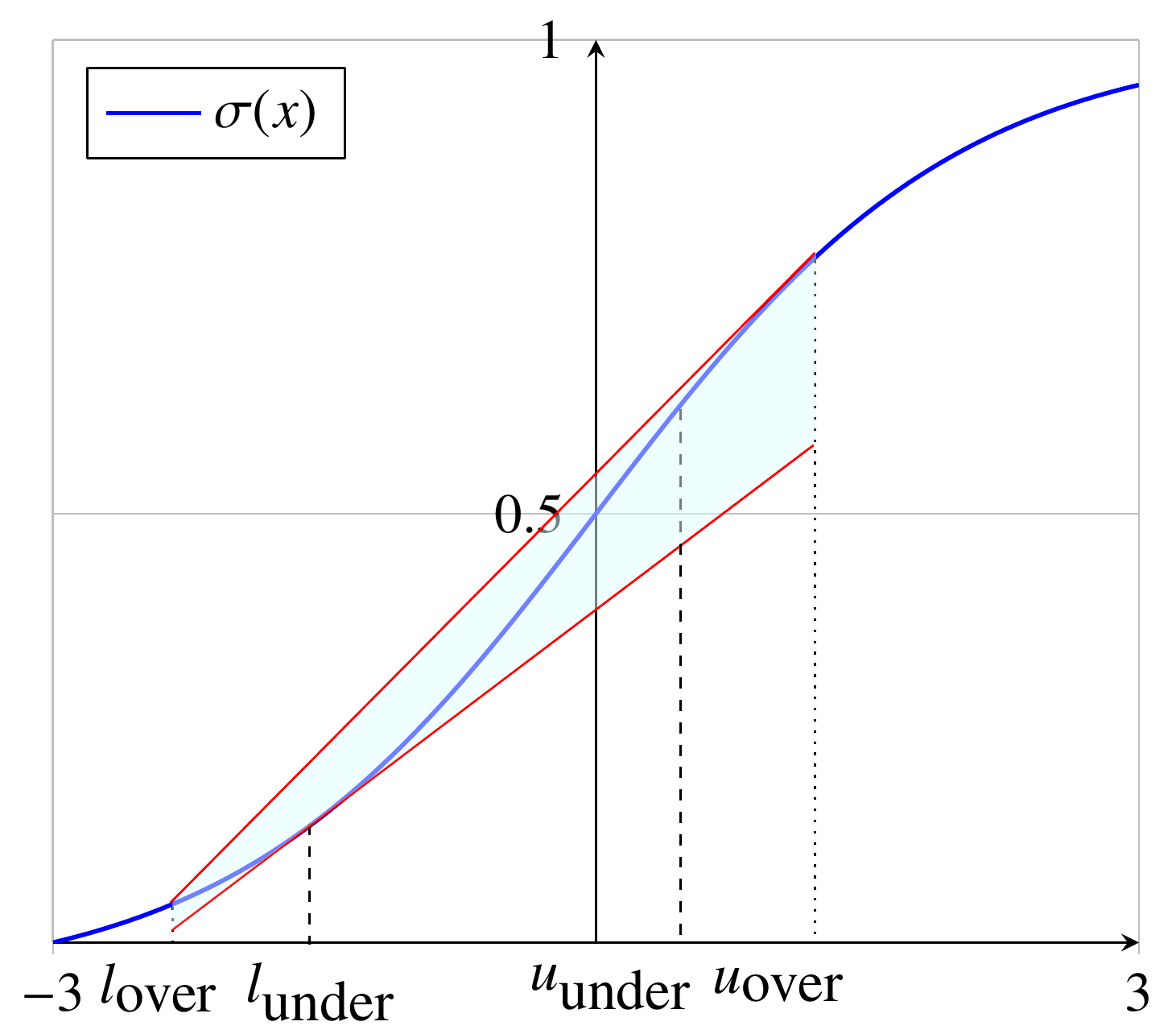}
		\caption{Case I: $\sigma'(l_U) < k < \sigma'(u_U)$.}
		\label{fig:a}
	\end{subfigure}
	\hfill
	\begin{subfigure}{0.24\textwidth}
		\includegraphics[width=\textwidth]{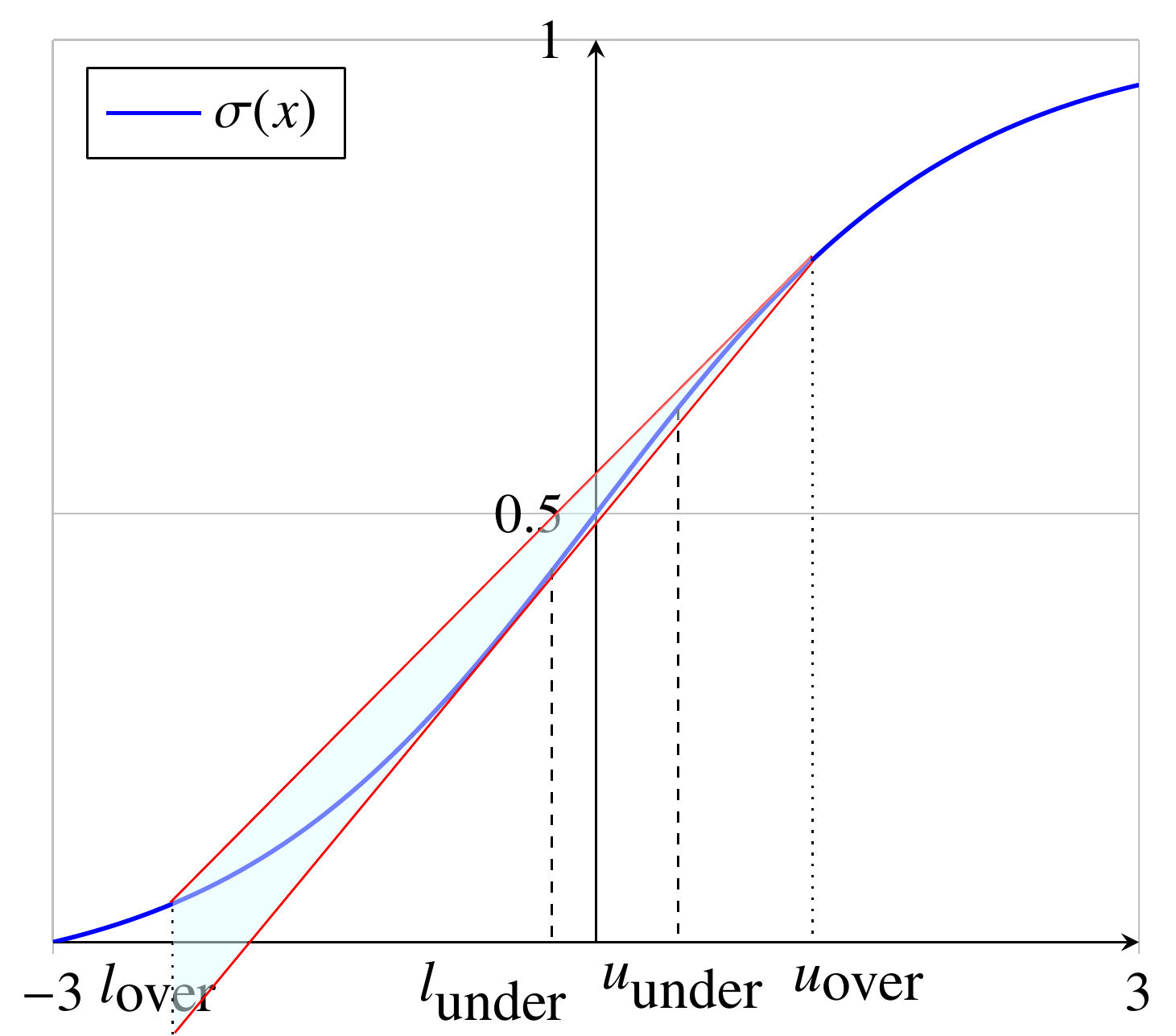}
		\caption{Case I: $\sigma'(l_U) < k < \sigma'(u_U)$}
		\label{fig:b}
	\end{subfigure}
	\hfill
	\begin{subfigure}{0.24\textwidth}
		\includegraphics[width=\textwidth]{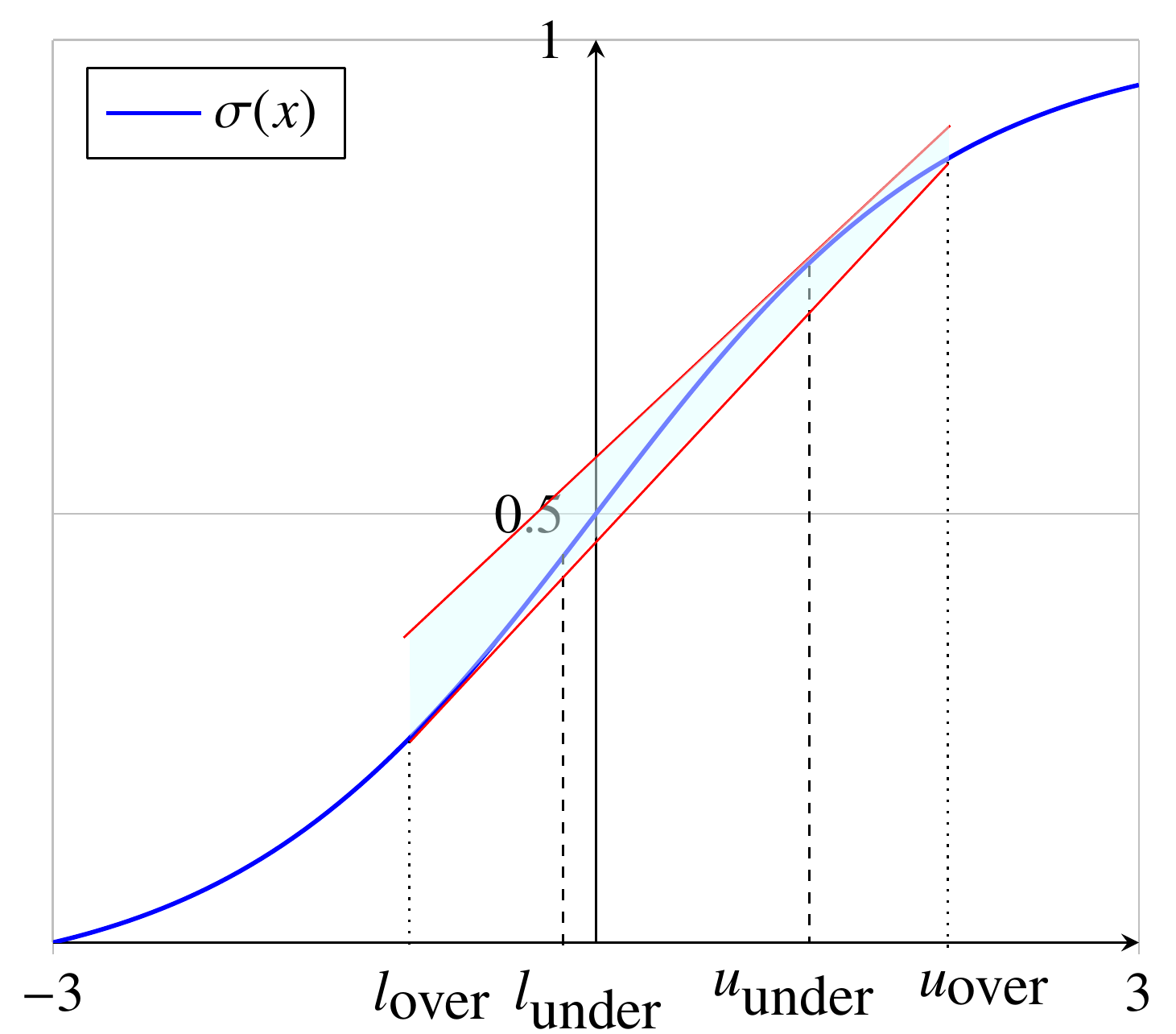}
		\caption{Case II: $\sigma'(l_U) > k > \sigma'(u_U)$.}
		\label{fig:c}
	\end{subfigure}
	\hfill
	\begin{subfigure}{0.24\textwidth}
		\includegraphics[width=\textwidth]{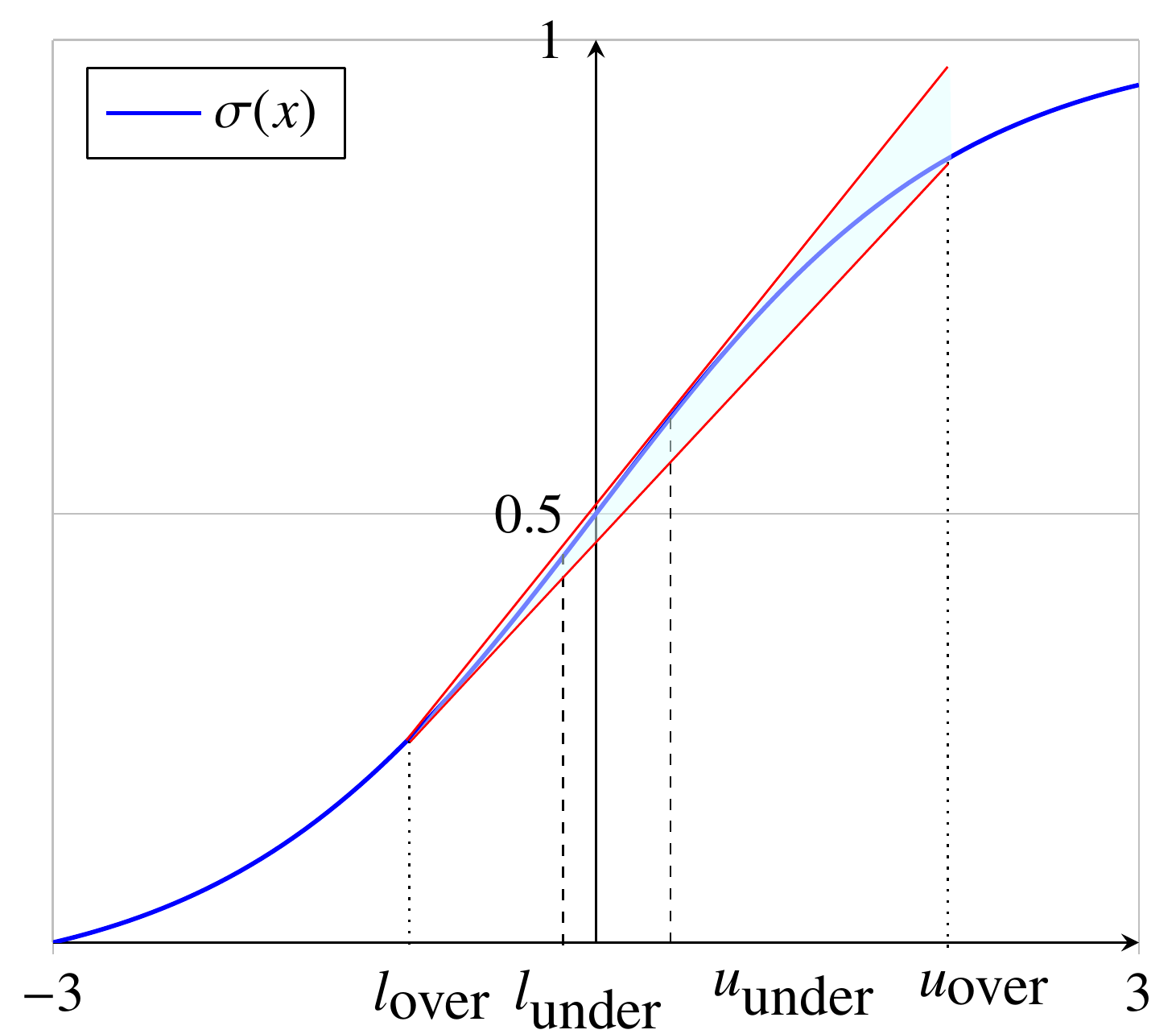}
		\caption{Case II: $\sigma'(l_U) > k > \sigma'(u_U)$.}
		\label{fig:d}
	\end{subfigure}
	\hfill\\
	\begin{subfigure}{0.24\textwidth}
		\includegraphics[width=\textwidth]{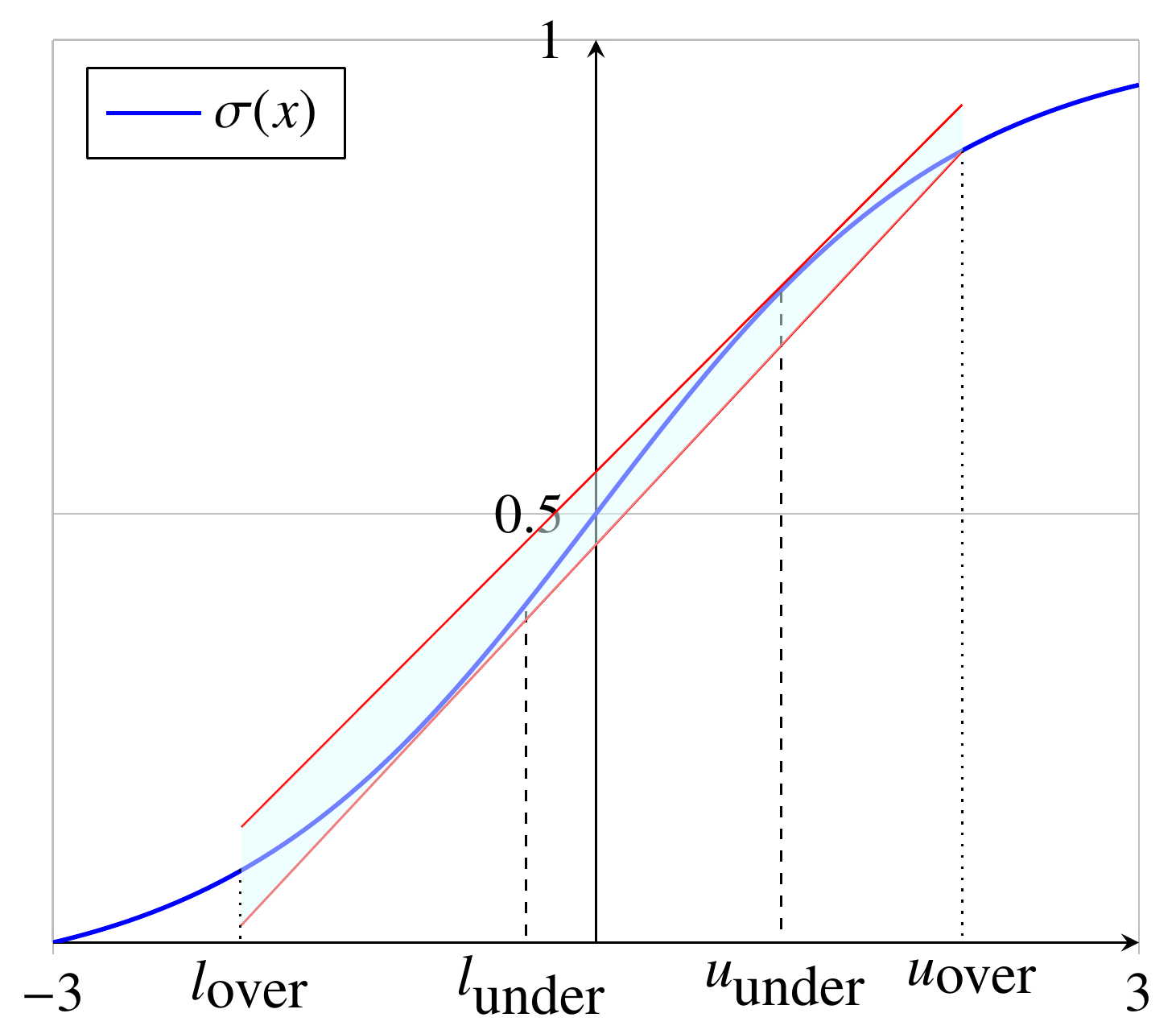}
		\caption{Case\,III: $\sigma'(l_U)\!<\!k\wedge\sigma'(u_U)\!<\!k$.}
		\label{fig:e}
	\end{subfigure}
	\begin{subfigure}{0.24\textwidth}
		\includegraphics[width=\textwidth]{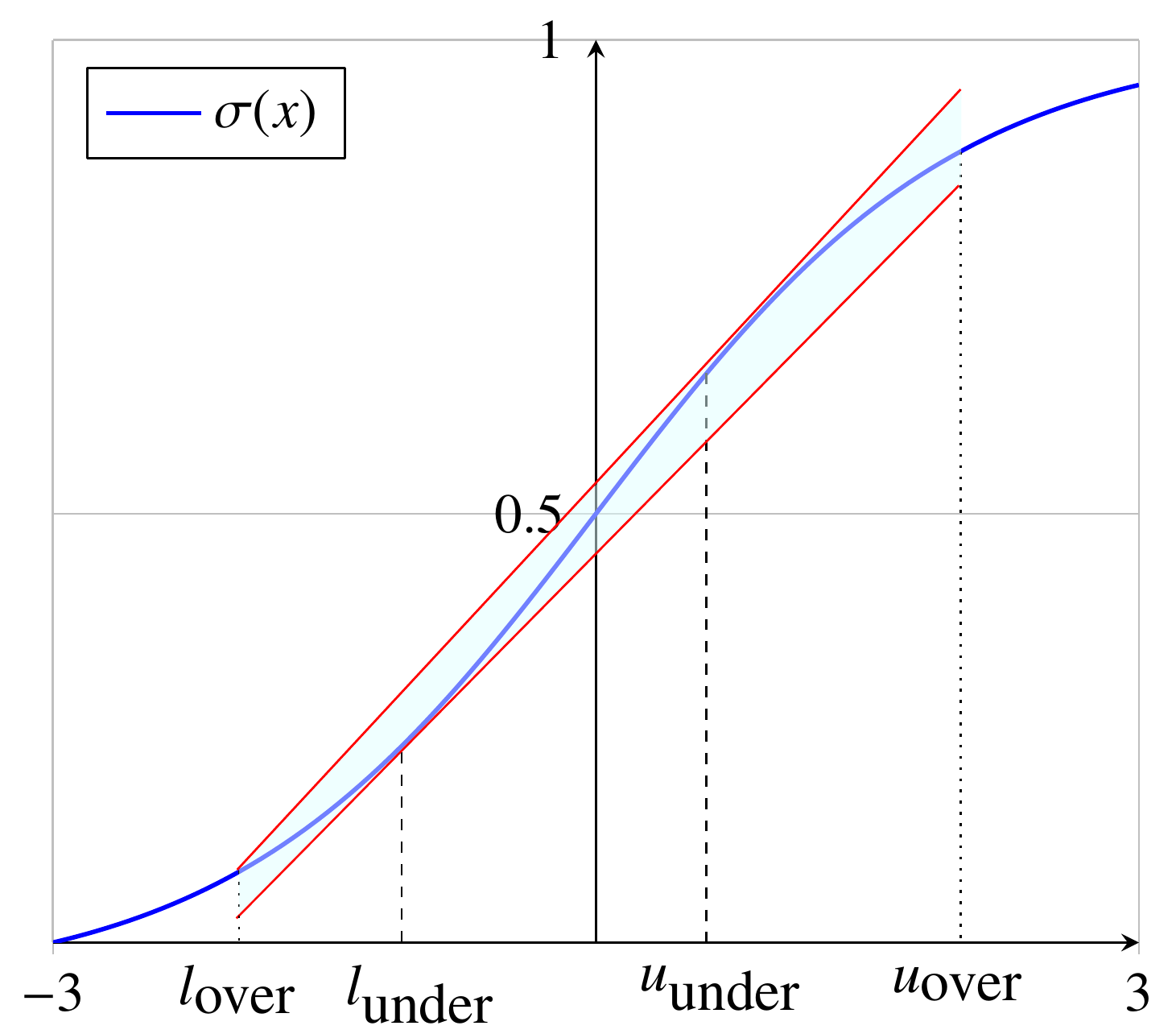}
		\caption{Case\,III: $\sigma'(l_U) \!<\! k\wedge\sigma'(u_U) \!<\! k$.}
		\label{fig:f}
	\end{subfigure}
	\hfill
	\begin{subfigure}{0.24\textwidth}
		\includegraphics[width=\textwidth]{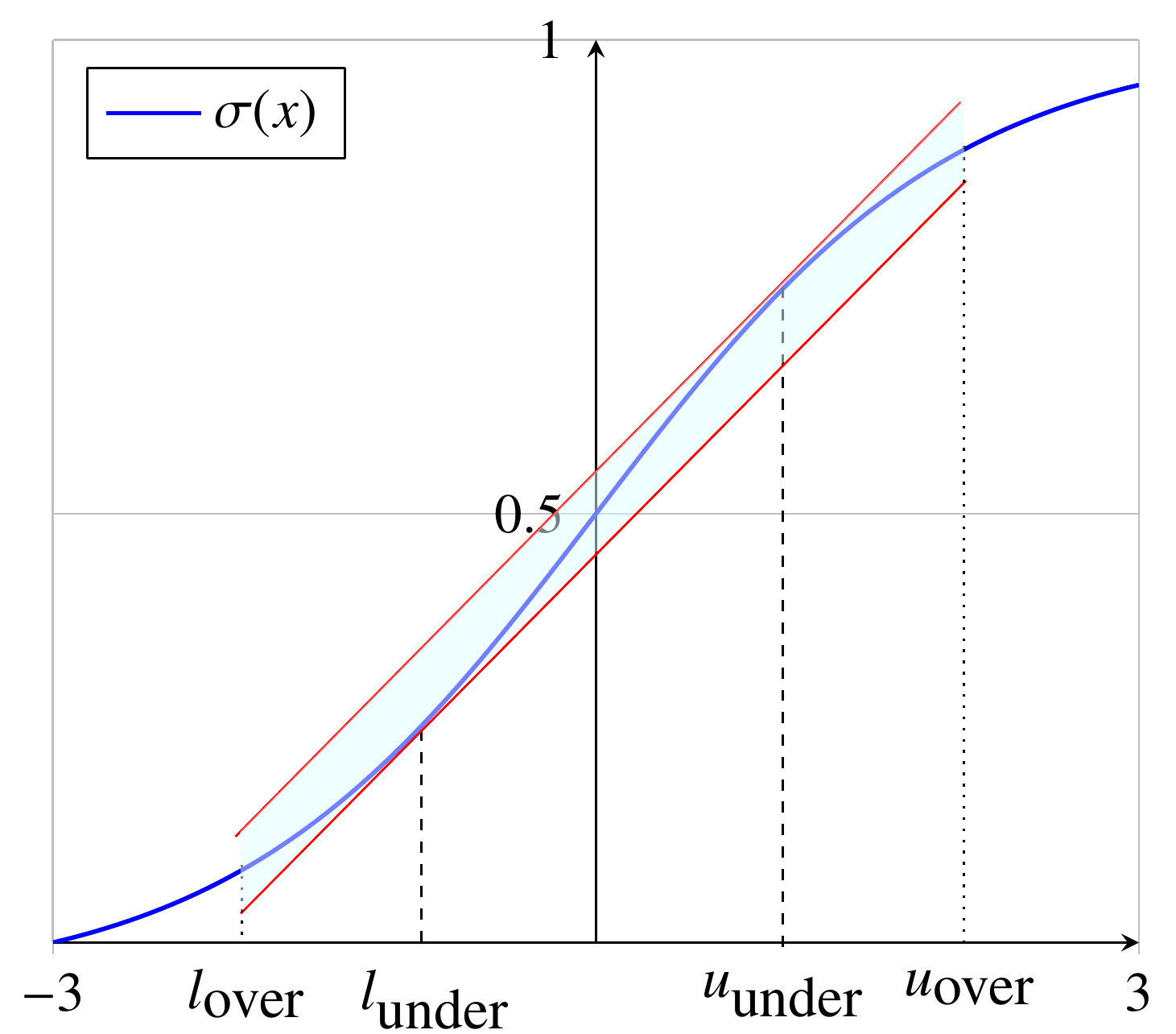}
		\caption{Case\,III: $\sigma'(l_U) \!<\! k\!\wedge\!\sigma'(u_U) \!<\! k$.}
		\label{fig:g}
	\end{subfigure}
	\hfill
	\begin{subfigure}{0.24\textwidth}
		\includegraphics[width=\textwidth]{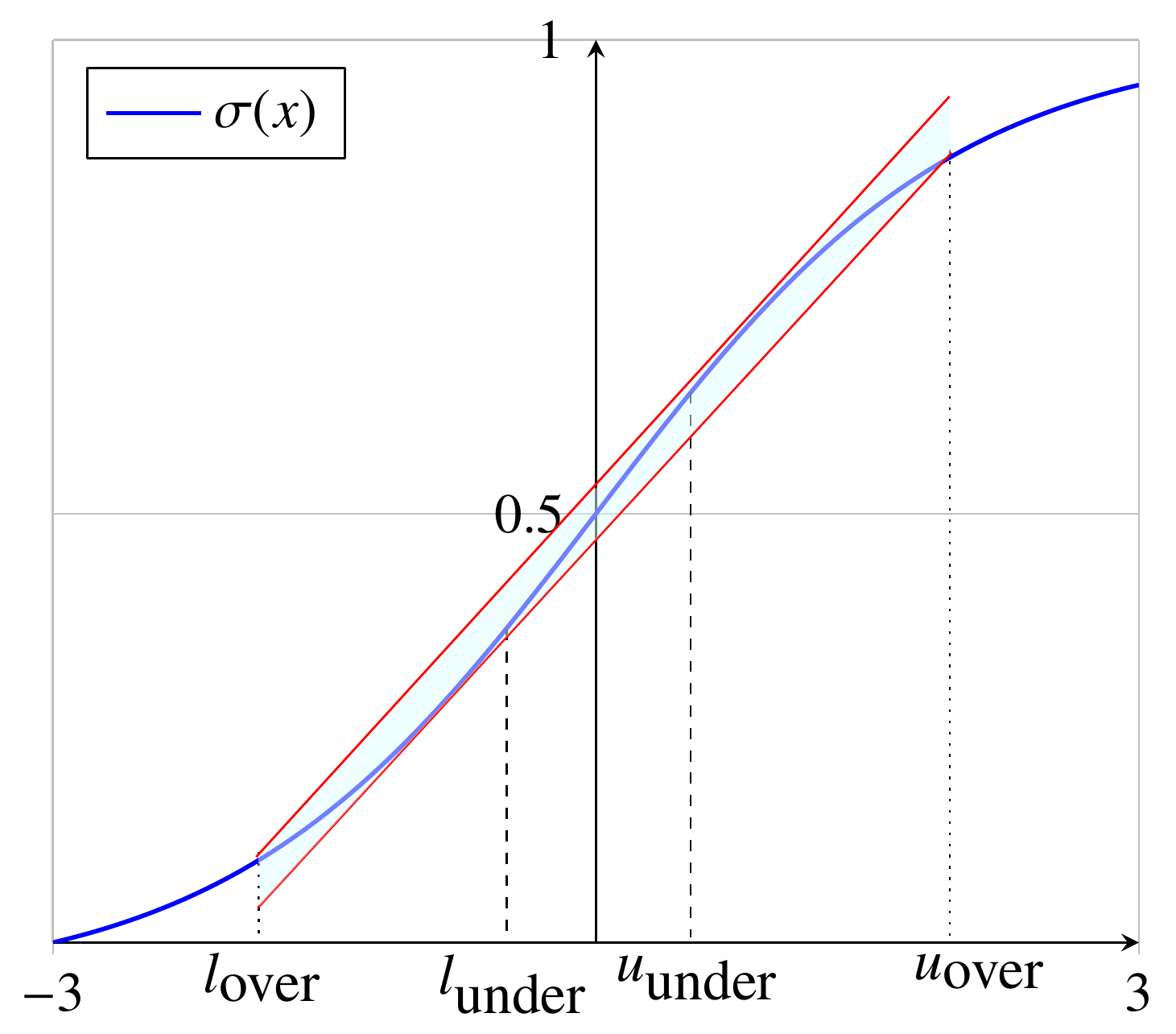}
		\caption{Case\,III: $\sigma'(l_U) \!<\! k\!\wedge\!\sigma'(u_U) \!<\! k$.}
		\label{fig:h}
	\end{subfigure}        
	\caption{The linear over-approximation based on overestimated and underestimated approximation domains.}
	\label{fig:figures}
	\vspace{-4mm}
\end{figure*}

\subsection{The Over-Approximation Strategy}

We omit the superscript and subscript and consider finding the approximation method of $\sigma(x)$ with the information of upper and lower approximation intervals. We assume that the lower approximation interval of input $x$ is $[l_L, u_L]$ and the upper approximation interval is $[l_U, u_U]$. As in \cite{wu2021tightening}, we consider three cases according to the relation between the slopes of $\sigma$ at the two endpoints of upper approximation interval $\sigma'(l_U)$, $\sigma'(u_U)$ and $k = \frac{\sigma(u_U)-\sigma(l_U)}{u_U-l_U}$.

\vspace{1ex}
\noindent\textbf{Case I.} When $\sigma'(l_U) < k < \sigma'(u_U)$, the line connecting the two endpoints is the upper bound. For the lower bound, the tangent line of $\sigma$ at $l_L$ is chosen if it is sound (Figure  \ref{fig:a}), otherwise the tangent line of $\sigma$ at $d$ crossing $(u_U, \sigma(u_U))$ is chosen (Figure \ref{fig:b}). Namely, we have $h_U(x) = k (x-u_U) + \sigma(u_U)$, and

\begin{equation} 
	h_L(x) = \left\{
	\begin{split}
		& \sigma'(l_L)(x - l_L) + \sigma(l_L), & l_L < d\\
		& \sigma'(d)(x - d) + \sigma(d), & l_L \ge d.
	\end{split}
	\right.
\end{equation}

\vspace{1ex}
\noindent\textbf{Case II.} When $\sigma'(l_U) > k > \sigma'(u_U)$, it is the symmetry of Case 1. the line connecting the two endpoints can be the lower bound. For upper bound, the tangent line of $\sigma$ at $u_L$ is chosen if it is sound (Figure \ref{fig:c}), otherwise the tangent line of $\sigma$ at $d$ crossing $(l_L, \sigma(l_L))$ is chosen (Figure  \ref{fig:d}). That is, $h_L(x) = k (x-l_U) + \sigma(l_U)$, and

\begin{equation} 
	h_U(x) = \left\{
	\begin{split}
		& \sigma'(u_L)(x - u_L) + \sigma(u_L), & u_L > d\\
		& \sigma'(d)(x - d) + \sigma(d), & u_L \le d.
	\end{split}
	\right.
\end{equation}

\vspace{1ex}
\noindent\textbf{Case III.} When $\sigma'(l_U) < k$ and $\sigma'(u_U) < k$, we first consider the upper bound. If the tangent line of $\sigma$ at $u_L$ is sound, we choose it to be the upper bound (Figure \ref{fig:e} and Fig \ref{fig:g}); otherwise we choose the tangent line of $\sigma$ at $d_1$ crossing $(l_L, \sigma(l_L))$ (Figure \ref{fig:f} and Figure \ref{fig:h}). Then we consider the lower bound. The tangent line of $\sigma$ at $l_L$ is chosen if it is sound (Figure \ref{fig:f} and Figure \ref{fig:g}), otherwise we choose the tangent line of $\sigma$ at $d_2$ crossing $(u_U, \sigma(u_U))$ (Figure \ref{fig:e} and Figure \ref{fig:h}). Namely, we have:

\begin{equation} 
	h_U(x) = \left\{
	\begin{split}
		& \sigma'(u_L)(x - u_L) + \sigma(u_L), & u_L > d_1\\
		& \sigma'(d_1)(x - d_1) + \sigma(d_1), & u_L \le d_1,
	\end{split}
	\right.
\end{equation}

\begin{equation} 
	h_L(x) = \left\{
	\begin{split}
		& \sigma'(l_L)(x - l_L) + \sigma(l_L), & l_L < d_2\\
		& \sigma'(d_2)(x - d_2) + \sigma(d_2), & l_L \ge d_2.
	\end{split}
	\right.
\end{equation}

The main idea of the approximation strategy is to make the overestimated input range as close as possible to the real one of each hidden neuron. As described in Theorem \ref{real_domain_tightness}, a preciser range allows us to define a tighter linear approximation. Under the premise of guaranteeing soundness, we use the guiding significance of the lower approximation interval to make the linear approximation closer to the real value of each node so as to obtain a more accurate approximation interval. Through layer-by-layer transmission, a more accurate output interval can be obtained, which thereby is used to compute more precise robustness verification results.

\begin{figure}
	\begin{center}
		\includegraphics[width=0.48\textwidth]{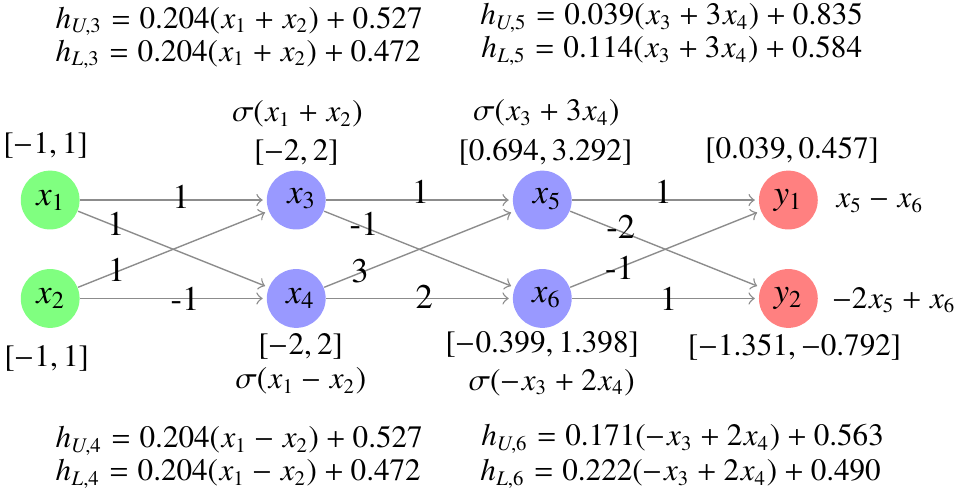}
		\caption{The approximations defined in our approach and the propagated intervals for the same network in Figure \ref{approximation_eg}.}
		\label{approx_eg_tightest}
	\end{center}
	\vspace{-7mm}
\end{figure}

\begin{example}\label{exm:3}
We reconsider the network in Figure \ref{approximation_eg} and define tighter over-approximations using our approach. Figure \ref{approx_eg_tightest} shows the approximations and the propagated intervals for the neurons on the hidden layers and the output layer. Because $x_3,x_4$ have precise input intervals, only $x_5,x_6$ need to under-approximate. Thus, we only need to redefine  their approximations according to our approach. We underestimate the input domains of $x_5,x_6$ and use them to guide the over-approximations in our approach. We achieve $9.74\%$ and $0.27\%$ reductions for the overestimations of $y_1,y_2$'s output ranges.  

\end{example}

Example \ref{exm:3} demonstrates the effectiveness of our proposed approximation approach even when it is applied to only one layer. The key  step of our approach from others is to compute corresponding underestimated input domains for the activation functions and guide theirs over-approximations.


\section{Under-Approximation Approaches}\label{sec:under}
In this section, we introduce two approaches, i.e., \emph{gradient-based} and \emph{sampling-based}, for underestimating the input domains for the activation functions on hidden neurons. The two approaches are complementary in that  the 
former is more efficient but computes less precise underestimated input domains while the latter performs in the opposite direction.

\subsection{The Sampling-Based Algorithm}

A simple yet efficient under-approximation approach is to randomly generate a number of valid samples and feed them into the network to track the reachable bounds of each hidden neuron's input. A sample is valid if the distance between it and the original input is less than a preset perturbation distance $\epsilon$. 


Algorithm \ref{sample_algorithm} shows the pseudo code of the sampling-based approach. First, we randomly generate 
$n$ valid samples from $\mathbb{B}_\infty(x_0, \epsilon)$ (Line 1) and initialize the lower and upper bounds $l_{L,r}^{(i)}$ and $u_{L,r}^{(i)}$ of each hidden neuron (Line 2). 
Then we feed each sample into the network (Line 3), record the input value  $v^{(i)}_{L,r}$ of each activation function (Line 6), and update the corresponding lower or upper bound by $v^{(i)}_{L,r}$ (Lines 7-8). 
The complexity of time of this algorithms is $O(n \sum_{i=1}^{m}k_i k_{i-1})$.


\setlength{\textfloatsep}{5pt}	
\begin{algorithm}[t]
	\SetKwData{Left}{left}\SetKwData{This}{this}\SetKwData{Up}{up}
	\SetKwFunction{Union}{Union}\SetKwFunction{FindCompress}{FindCompress}
	\SetKwInOut{Input}{Input}\SetKwInOut{Output}{Output}
	\caption{Sample-Based Under-Approximation.}
	\label{sample_algorithm}
	\Input{$F$: a network; $x_0$: an input to $F$; $\epsilon$: a $\ell_\infty$-norm radius; $n$: number of samples}
	\Output{$l_{L,r}^{(i)}, u_{L,r}^{(i)}$ for each hidden neuron $r$ on layer $i$}
	Randomly generate $n$ samples $S_n$ from $\mathbb{B}_\infty(x_0, \epsilon)$;\\
	$l_{L}\leftarrow -\infty, u_{L}\leftarrow \infty$\tcp*{Initialize all  upper and lower bounds}
	\For{each sample $x_p$ in $S_n$}{
		\For{each hidden layer $i$}{
			\For{each neuron $r$ on layer $i$}{
				$v_{p,r}^{(i)} := F_r^{(i)}(x_p)$\tcp*{Compute the output of  neuron $r$}
				$l_{L,r}^{(i)}\leftarrow \min (l_{L,r}^{(i)}, v_{p,r}^{(i)})$ \tcp*{Update $r$'s lower bound}
				$u_{L,r}^{(i)}\leftarrow \max (u_{L,r}^{(i)}, v_{p,r}^{(i)})$\tcp*{Update $r$'s upper  bound}
			}
		}
	}
	
\end{algorithm}

\subsection{The Gradient-Based Algorithm}\label{Gradient-Based_Approach}
\vspace{-1mm}

The conductivity of neural networks allows us to approximate the actual domain of each hidden neuron in Definition \ref{actual_domain_def} by gradient descent \cite{ruder2016overview}. The basic idea of gradient descent is to compute two valid samples according to the gradient of an  objective function to minimize and maximize the output value of the function, respectively. 
Using gradient descent, we can compute locally optimal lower and upper bounds as the underestimated input domains of activation functions.

Algorithm \ref{GD_algorithm} shows the pseudo code of the gradient-based approach. 
Its inputs include a neural network $F$, an input $x_0$ of $F$, an  $\ell_\infty$-norm radius $\epsilon$ and a step length $a$ of gradient descent. It returns an underestimated input domain for each neuron on the hidden layers. 
It first gets the function $F^{(i)}_r$ by Definition \ref{actual_domain_def} (Line 3), computes the gradient of $F^{(i)}_r$ and records its sign $\eta^{i}_r$ as the direction to update $x_0$ (Line 4). Then, 
the gradient descent is conducted with  one step forward  to generate a new input sample $x_{lower}$ (Line 5).  $x_{lower}$ is then modified to make sure it is in the normal ball. By feeding $x_{lower}$ to $F^{(i)}_r$, we obtain an under-approximated lower bound $l_{L,r}^{(i)}$ (Line 7). The upper bound can be computed likewise (Lines 8-10).

Considering the time complexity of Algorithm \ref{GD_algorithm}, we need to compute the gradient for each neuron on the $i$th hidden layer, of which time complexity is $O(\sum_{j=1}^i k_j k_{j-1})$. Thus, given an $m$-hidden-layer network, the time complexity of the gradient-based algorithm is $O(\sum_{i=1}^m k_i (\sum_{j=1}^i k_j k_{j-1}))$. 


\setlength{\textfloatsep}{5pt}	
\DecMargin{1em}
\begin{algorithm}[t]
	\SetKwData{Left}{left}\SetKwData{This}{this}\SetKwData{Up}{up}
	\SetKwFunction{Union}{Union}\SetKwFunction{FindCompress}{FindCompress}
	\SetKwInOut{Input}{Input}\SetKwInOut{Output}{Output}
	\caption{Gradient-Based Under-Approximation.}
	\label{GD_algorithm}
	\Input{$F$: a network; $x_0$: an input to $F$; $\epsilon$: a $\ell_\infty$-norm radius; $a$: the step length of gradient descent}
	\Output{$l_{L,r}^{(i)}, u_{L,r}^{(i)}$ for each  neuron $r$ in each hidden layer $i$}
	\For{each hidden layer $i={1,\ldots,m}$}{
		\For{each neuron $r$ on layer $i$}{
			Get the function $F_r^{(i)}$of neuron $r$\;
			$\eta_r^{(i)} \leftarrow sign(F_r^{(i)'}(x_0))$\tcp*{Get the sign of gradient of $r$}
			$x_{lower} \leftarrow x_0 - a\eta_r^{(i)} $\tcp*{One-step forward}
			Cut $x_{lower}$ s.t. $x_{lower} \in \mathbb{B}_\infty(x_0, \epsilon)$\tcp*{Make $x_{lower}$ valid}
			$l_{L,r}^{(i)}\leftarrow F_r^{(i)}(x_{lower})$\tcp*{Compute and store the lower bound}
			$x_{upper} \leftarrow x_0 + a\eta_r^{(i)} $\tcp*{Compute the upper case}
			Cut $x_{upper}$ s.t. $x_{upper} \in \mathbb{B}_\infty(x_0, \epsilon)$\\
			$u_{L,r}^{(i)}\leftarrow F_r^{(i)}(x_{upper})$ \tcp*{Compute and store the upper bound}
		}
	}
\end{algorithm}	


\section{Implementation and  Evaluation}\label{sec:exp}

For the evaluation purpose, we need to answer the following two research questions: 
\begin{enumerate}[(1)]
	\item \textit{How about the effectiveness and efficiency of our dual-approximation approach? }
	\item \textit{How to determine the hyper-parameters, i.e., the step length and the sampling number, in the gradient-based and sampling-based under-approximation algorithms?}
\end{enumerate}

\vspace{-2mm}
\subsection{Benchmarks and Experimental Settings}
\vspace{-1mm}
\noindent
\textit{Benchmarks.}
For comparison, we choose four state-of-the-art  approximation-based tools including NeWise \cite{2208.09872}, DeepCert \cite{wu2021tightening}, VeriNet \cite{HenriksenL20}, and RobustVerifier \cite{lin2019robustness}.  All are designed specifically for S-curved activation functions. 
We implemented our approach in a prototype tool \textsf{DualApp}. 
The tool is publicly available at \url{https://figshare.com/s/36048c6fb698e22bf13f}.   

\vspace{2mm}
\noindent
\textit{Datasets and Networks.} We collected and  trained totally 84 convolutional networks (CNNs) and fully-connected networks (FNNs) on image databases Mnist\cite{DBLP:journals/pieee/LeCunBBH98}, Fashion Mnist\cite{xiao2017/online} and Cifar-10\cite{krizhevsky2009learning}. The layers of  convolutional networks range from 4 to 10 for Mnist and Fashion Mnist. All of them have 5 filters in each layer. The number of neurons in the trained FNNs ranges from 60 to 1210 for Mnist and Fashion Mnist.  For Cifar-10, the trained CNNs consist of 3, 5, and 6 layers with different filters, respectively, while FNNs contain 510 and 2110 neurons. 
We trained three variant neural networks for some architecture using  Sigmoid, Tanh, and Arctan as activation functions, respectively.

\vspace{2mm}
\noindent
\textit{Experimental Settings.} 
We conducted all the experiments on a  workstation equipped with a 32-core AMD Ryzen Threadripper CPU @ 3.7GHz and 128GB RAM running Ubuntu 18.04.

\subsection{Experimental Results}

\begin{table*}[]
	\centering
	\def\arraystretch{0.9}
	\caption{Comparing the sampling-version \textsf{DualApp} (D.U.$_{sp}$) and four state-of-the-art tools including NeVise (N.W.), DeepCert (D.C.), VeriNet (V.N.) and RobustVerifier (R.V.) on the CNNs and FNNs with the Sigmoid activation function. ${\rm CNN}_{l-k}$ denotes a CNN with $l$ layers and  $k$ filters of size $3 \times 3$ on each layer. ${\rm FNN}_{l\times k}$ denotes a FNN with $l$ layers and $k$ neurons on each layer.}
	\label{tab:sigmoid}
	\setlength{\tabcolsep}{3pt}	
	\begin{tabular}{|l|r|r|r|r|r|r|r|r|r|r|r|r|r|}
		\hline
		\multirow{2}{*}{\textbf{Dataset}}                                                 & \multirow{2}{*}{\textbf{Model}} & \multirow{2}{*}{\textbf{Nodes}} & \textbf{D.U.$_{sp}$} & \multicolumn{2}{c|}{\textbf{N.W.}}              & \multicolumn{2}{c|}{\textbf{D.C.}}            & \multicolumn{2}{c|}{\textbf{V.N.}}             & \multicolumn{2}{c|}{\textbf{R.V.}}      & \multirow{2}{*}{\begin{tabular}[c]{@{}c@{}}\textbf{D.U.$_{sp}$} \\ \textbf{Time} \textbf{(s)}\end{tabular}} & \multirow{2}{*}{\begin{tabular}[c]{@{}c@{}}\textbf{Others Time(s)}\end{tabular}} \\ \cline{4-12}
		&                        &                        &  {Bounds}  & {Bounds}  & Impr. (\%) & {Bounds}  & Impr. (\%)  & {Bounds}  & Impr. (\%) & {Bounds}  & Impr. (\%) &                                                                               &                                 \\ \hline
		\multirow{7}{*}{Mnist}                                                    & ${\rm CNN}_{4-5}$      & 8,690                  & 0.05819  & {0.05698} & 2.12      & {0.05394} & 7.88      & {0.05425} & 7.26      & {0.05220} & 11.48     & 14.70                                                                         & 0.98   $\pm$ 0.02  \\
		& ${\rm CNN}_{5-5}$      & 10,690                 & 0.05985  & {0.05813} & 2.96      & {0.05481} & 9.20      & {0.05503} & 8.76      & {0.05125} & 16.78     & 20.13                                                                         & 2.67   $\pm$ 0.29  \\
		& ${\rm CNN}_{6-5}$      & 12,300                 & 0.06450   & {0.06235} & 3.45      & {0.05898} & 9.36      & {0.05882} & 9.66      & {0.05409} & 19.25     & 25.09                                                                         & 4.86   $\pm$ 0.34  \\
		& ${\rm CNN}_{8-5}$     & 14,570                 & 0.11412  & {0.09559} & 19.38     & {0.08782} & 29.95     & {0.08819} & 29.40     & {0.06853} & 66.53     & 34.39                                                                         & 11.89   $\pm$ 0.21 \\
		& ${\rm FNN}_{5\times 100}$                  & 510                    & 0.00633  & {0.00575} & 10.09     & {0.00607} & 4.28      & {0.00616} & 2.76      & {0.00519} & 21.97     & 7.10                                                                           & 0.79   $\pm$ 0.05  \\
		& ${\rm FNN}_{6\times 200}$                  & 1,210                  & 0.02969  & {0.02909} & 2.06      & {0.02511} & 18.24     & {0.02829} & 4.95      & {0.01811} & 63.94     & 8.64                                                                         & 2.82   $\pm$ 0.34  \\ \hline
		\multirow{7}{*}{\begin{tabular}[c]{@{}l@{}}Fashion \\ Mnist\end{tabular}} & ${\rm CNN}_{4-5}$      & 8,690                  & 0.07703  & {0.07473} & 3.08      & {0.07204} & 6.93      & {0.07200} & 6.99      & {0.06663} & 15.61     & 15.26                                                                         & 1.06   $\pm$ 0.09  \\
		& ${\rm CNN}_{5-5}$      & 10,690                 & 0.07288  & {0.07044} & 3.46      & {0.06764} & 7.75      & {0.06764} & 7.75      & {0.06046} & 20.54     & 20.95                                                                         & 3.18   $\pm$ 0.42  \\
		& ${\rm CNN}_{6-5}$      & 12,300                 & 0.07655  & {0.07350}  & 4.15      & {0.06949} & 10.16     & {0.06910} & 10.78     & {0.06265} & 22.19     & 25.96                                                                         & 5.63   $\pm$ 0.77  \\
		& ${\rm CNN}_{8-5}$      & 14,570                 & 0.14119  & {0.14551} & -2.97     & {0.12448} & 13.42     & {0.12376} & 14.08     & {0.08246} & 71.22     & 37.36                                                                          & 13.22   $\pm$ 0.98 \\
		& ${\rm FNN}_{1\times 50}$                   & 60                     & 0.03616  & {0.03284} & 10.11     & {0.03511} & 2.99      & {0.03560} & 1.57      & {0.02922} & 23.75     & 0.84                                                                          & 0.02   $\pm$ 0.00    \\
		& ${\rm FNN}_{5\times 100}$                  & 510                    & 0.00801  & {0.00710}  & 12.82     & {0.00776} & 3.22      & {0.00789} & 1.52      & {0.00656} & 22.10     & 2.98                                                                          & 0.65   $\pm$ 0.00     \\ \hline
		\multirow{5}{*}{Cifar-10}                                                 & ${\rm CNN}_{3-2}$    & 2,514                  & 0.03197  & {0.03138} & 1.88      & {0.03120} & 2.47      & {0.03119} & 2.50      & {0.03105} & 2.96      & 5.54                                                                          & 0.32   $\pm$ 0.02  \\
		& ${\rm CNN}_{5-5}$     & 10,690                 & 0.01973  & {0.01926} & 2.44      & {0.01921} & 2.71      & {0.01913} & 3.14      & {0.01864} & 5.85      & 31.45                                                                         & 4.86   $\pm$ 0.41  \\
		& ${\rm CNN}_{6-5}$      & 12,300                 & 0.02338  & {0.02289} & 2.14      & {0.02240} & 4.38      & {0.02234} & 4.66      & {0.02124} & 10.08     & 43.51                                                                         & 10.53   $\pm$ 0.67 \\
		& ${\rm FNN}_{5\times 100}$                 & 510                    & 0.00370  & {0.00329} & 12.46      & {0.00368} & 0.54      & {0.00368} & 0.54      & {0.00331} & 11.78      & 2.97                                                                          & 0.64   $\pm$ 0.01  \\
		& ${\rm FNN}_{3\times 700}$                    & 2,110                  & 0.00428  & {0.00348}  & 22.99      & {0.00427} & 0.23     & {0.00426} & 0.47      & {0.00397} & 7.81      & 32.68                                                                         & 10.85   $\pm$ 0.58 \\ \hline
	\end{tabular}
	\label{exp_only_sample}
	\vspace{-1mm}
\end{table*} 

\begin{table*}[]
	\centering
	\def\arraystretch{0.9}
	\caption{Comparison between sampling-based and existing tools on Tanh and Arctan networks.}
	\setlength{\tabcolsep}{3pt}	
	\begin{tabular}{|l|r|r|r|r|r|r|r|r|r|r|r|r|r|}
		\hline
		\multirow{2}{*}{\textbf{Dataset}} & \multirow{2}{*}{\textbf{Model}} & \multirow{2}{*}{\textbf{$\sigma$}} & \textbf{D.U.$_{sp}$} & \multicolumn{2}{c|}{\textbf{N.W.}}  & \multicolumn{2}{c|}{\textbf{D.C.}} & \multicolumn{2}{c|}{\textbf{V.N.}}             & \multicolumn{2}{c|}{\textbf{R.V.}}      & \multirow{2}{*}{\begin{tabular}[c]{@{}c@{}}\textbf{D.U.$_{sp}$} \\ \textbf{Time} \textbf{(s)}\end{tabular}} & \multirow{2}{*}{\begin{tabular}[c]{@{}c@{}}\textbf{Others Time(s)} \end{tabular}} \\ \cline{4-12}
		&                        &                        &  {Bounds}  & {Bounds}  & Impr. (\%) & {Bounds}  & Impr. (\%)  & {Bounds}  & Impr. (\%) & {Bounds}  & Impr. (\%) &                                                                               &                                 \\ \hline
		\multirow{6}{*}{Mnist}                                                   

		& \multirow{2}{*}{${\rm CNN}_{5-5}$}   & Tanh   & 0.01501  & 0.01224 & 22.63 & 0.01486  & 1.01  & 0.01481 & 1.35 & 0.01281 & 17.17 & 19.75 & 2.38 \textpm 0.32  \\
		&                                      & Arctan & 0.01503  & 0.01232 & 22.00 & 0.01286  & 16.87 & 0.01488 & 1.01 & 0.01253 & 19.95 & 14.17 & 2.52 \textpm 0.22  \\
		& \multirow{2}{*}{${\rm CNN}_{6-5}$}   & Tanh   & 0.01244  & 0.00997 & 24.77 & 0.01200    & 3.67  & 0.01204 & 3.32 & 0.00976 & 27.46 & 24.87 & 4.80 \textpm 0.57  \\
		&                                      & Arctan & 0.01610  & 0.01288 & 25.00 & 0.01213  & 32.73 & 0.01565 & 2.88 & 0.01285 & 25.29 & 18.61 & 5.11 \textpm 0.72  \\
		& \multirow{2}{*}{${\rm FNN}_{5x100}$} & Tanh   & 0.00476  & 0.00343 & 38.78 & 0.00464  & 2.59  & 0.00469 & 1.49 & 0.00389 & 22.37 & 6.74  & 0.88 \textpm 0.10  \\
		&                                      & Arctan & 0.00498  & 0.00367 & 35.69 & 0.00354  & 40.68 & 0.00489 & 1.84 & 0.00405 & 22.96 & 6.45  & 0.74 \textpm 0.15  \\
		\hline
		\multirow{4}{*}{\begin{tabular}[c]{@{}l@{}}Fashion \\ Mnist\end{tabular}} 
		& \multirow{2}{*}{${\rm CNN}_{6-5}$}   & Tanh   & 0.01239  & 0.00993 & 24.77 & 0.01210   & 2.40  & 0.01210  & 2.40 & 0.01042 & 18.91 & 21.01 & 5.28 \textpm 0.82  \\
		&                                      & Arctan & 0.01606  & 0.01260  & 27.46 & 0.01299  & 23.63 & 0.01571 & 2.23 & 0.01300  & 23.54 & 18.38 & 4.75 \textpm 0.27  \\
		& \multirow{2}{*}{${\rm FNN}_{5x100}$} & Tanh   & 0.00320  & 0.00227 & 40.97 & 0.00312  & 2.56  & 0.00314 & 1.91 & 0.00259 & 23.55 & 6.91  & 0.75 \textpm 0.05  \\
		&                                      & Arctan & 0.00332  & 0.00240  & 38.33 & 0.00235  & 41.28 & 0.00327 & 1.53 & 0.00270  & 22.96 & 7.38  & 1.06 \textpm 0.08  \\
		\hline
		\multirow{6}{*}{Cifar-10}                                                 
		& \multirow{2}{*}{${\rm CNN}_{5-5}$}   & Tanh   & 0.00932  & 0.00762 & 22.31 & 0.00932  & 0.00  & 0.00930  & 0.22 & 0.00882 & 5.67  & 9.88  & 0.51   \textpm 0.07  \\
		&                                      & Arctan & 0.00960   & 0.00792 & 21.21 & 0.00901  & 6.55  & 0.00958 & 0.21 & 0.00914 & 5.03  & 9.03  & 0.51   \textpm 0.05  \\
		& \multirow{2}{*}{${\rm CNN}_{6-5}$}   & Tanh   & 0.00655  & 0.00524 & 25.00 & 0.00655  & 0.00  & 0.00653 & 0.31 & 0.00621 & 5.48  & 18.01 & 1.98   \textpm 0.65  \\
		&                                      & Arctan & 0.00780   & 0.00624 & 25.00 & 0.00656  & 18.90 & 0.00777 & 0.39 & 0.00738 & 5.69  & 17.85 & 2.16   \textpm 0.42  \\
		& \multirow{2}{*}{${\rm FNN}_{5x100}$} & Tanh   & 0.00187  & 0.00131 & 42.75 & 0.00185  & 1.08  & 0.00185 & 1.08 & 0.00163 & 14.72 & 23.58 & 12.29  \textpm 2.74  \\
		&                                      & Arctan & 0.00188  & 0.00136 & 38.24 & 0.00140   & 34.29 & 0.00187 & 0.53 & 0.00168 & 11.90 & 176.22 & 115.06 \textpm 14.03  \\
		\hline
	\end{tabular}
\vspace{-3mm}
	\label{exp_only_sample_tan_atan}
\end{table*} 


\noindent 
\textit{Experiment I.} Table \ref{exp_only_sample} shows the comparison results between sampling-based approximation and other four benchmarks on 17 networks with Sigmoid activation function. We randomly chose 100 inputs from each dataset and computed the average of their certified lower bounds. For each input image, we took 1000 samples for guiding approximation. 
The results show that our approach outperforms all four competitors almost in all cases. 
 The improvement can be up to 71.22\% when compared with RobustVerifier. 
 The only one exception is the network CNN$_{\text{8-5}}$ against NeWise. That is because NeWise is provably the tightest when the network is monotonous. We also found that, in general, the deeper the network is, the more the certified lower bound obtained by our approach is improved. 
Table \ref{exp_only_sample_tan_atan} partially presents the results on the Tanh and Arctan networks. In these models, our approach achieves up to 42.75\% improvement. The complete results can be found in the supplementary appendix. 

Regarding efficiency, our sampling-based approach takes a little more time than other tools because of the sampling procedure. However, these extra time-consuming can be ignored compared with the improvement of verification results. We use $t \pm e$ to denote the other four tools' average time cost $t$ and the size of the interval $2e$.   

We also evaluated the performances of the sampling-based and gradient descent-based algorithms in the verification. 
We considered 8 FNNs and 8 CNNs with the Sigmoid activation function in our experiment. 
For diversity, half of the networks are trained on Mnist and the other half on Fashion Mnist.

The comparison results are shown in Figure \ref{GDSampling}.
The abscissa denotes different network architecture,
the primary axis represents the value of certified-lower bound, and the secondary axis represents the time consuming in seconds.
It can be observed that the certified lower bounds computed by the gradient descent-based algorithm are always larger than the ones obtained by the sampling-based one. 
That is because the gradient descent-based approximation approach can theoretically compute tighter under-approximation bounds. 
This also reflects the effect of precise under-approximated bounds on defining tight over-approximations. 


Note that Figure \ref{GDSampling} shows that the gradient descent-based version costs more time than the sampling-based one, which is consistent with the time complexity of the two algorithms. If there is no time constraint, we can choose the gradient-based version for a more precise verification result. Otherwise, the sampling-based version is more suited to balancing the precision and time cost. 

\begin{figure*}
	\centering
	\begin{subfigure}{0.25\textwidth}
		\includegraphics[width=0.9\textwidth]{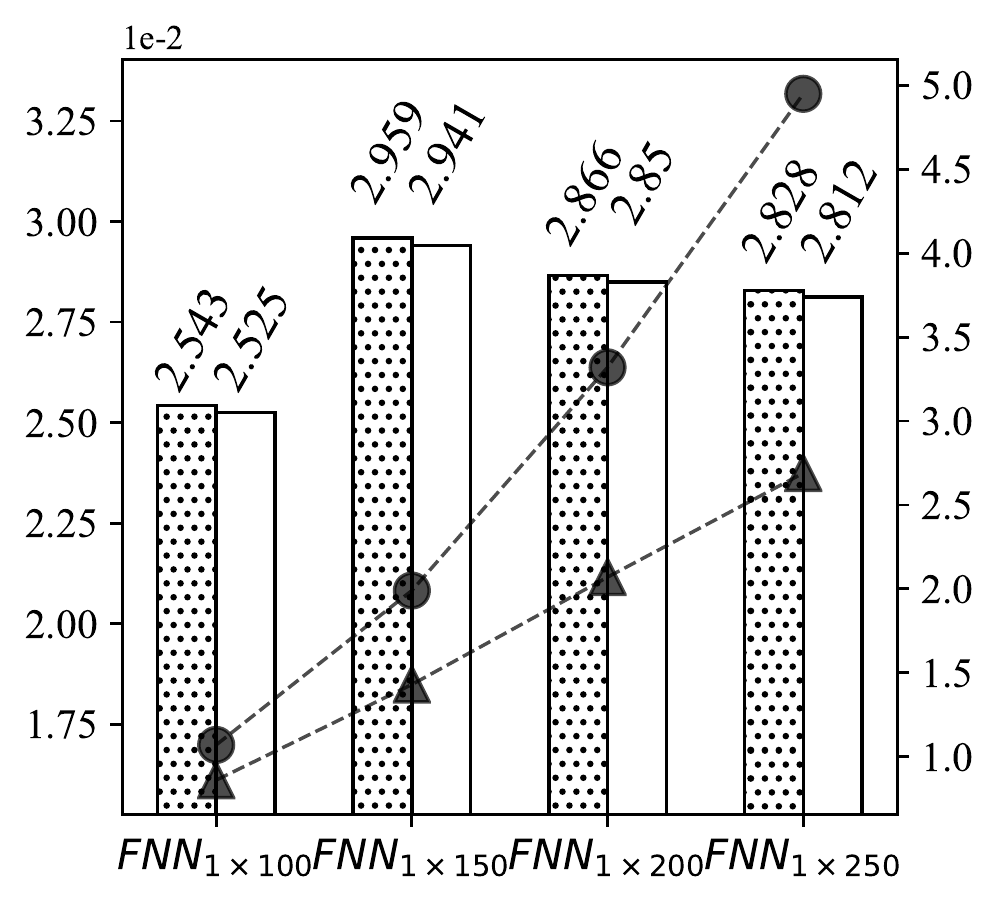}
		\caption{FNNs on Mnist.}
		\label{fig:GDSampling_mnist}
	\end{subfigure}
	\hfill
	\begin{subfigure}{0.24\textwidth}
		\includegraphics[width=0.9\textwidth]{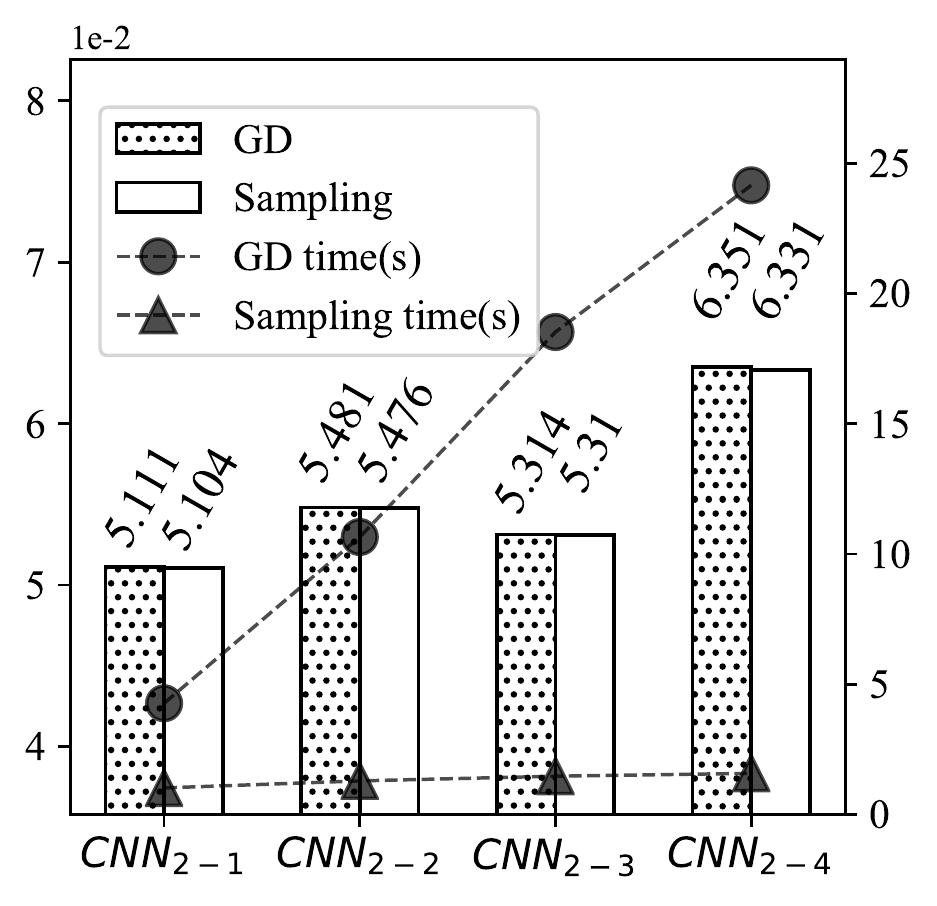}
		\caption{CNNs on  Mnist.}
		\label{fig:GDSampling_fashion_mnist}
	\end{subfigure}
	\hfill
	\centering
	\begin{subfigure}{0.24\textwidth}
		\includegraphics[width=0.9\textwidth]{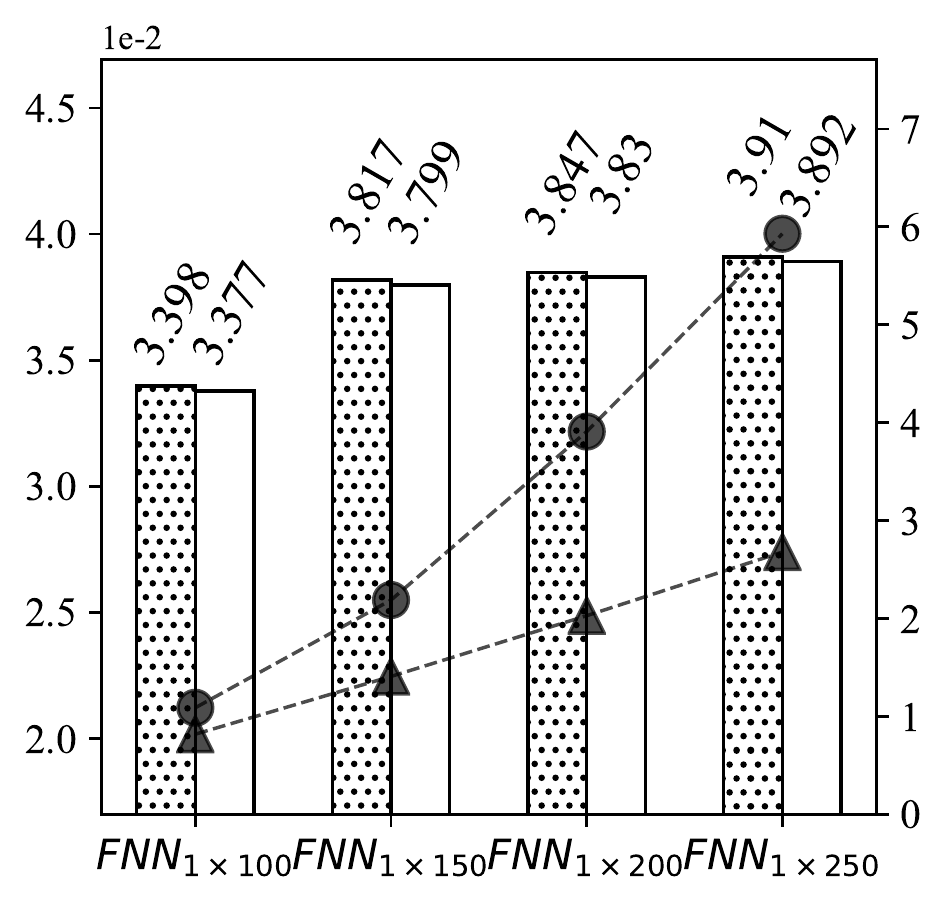}
		\caption{FNNs on Fashion Mnist.}
		\label{fig:GDSampling_mnist}
	\end{subfigure}
	\hfill
	\begin{subfigure}{0.24\textwidth}
		\includegraphics[width=0.9\textwidth]{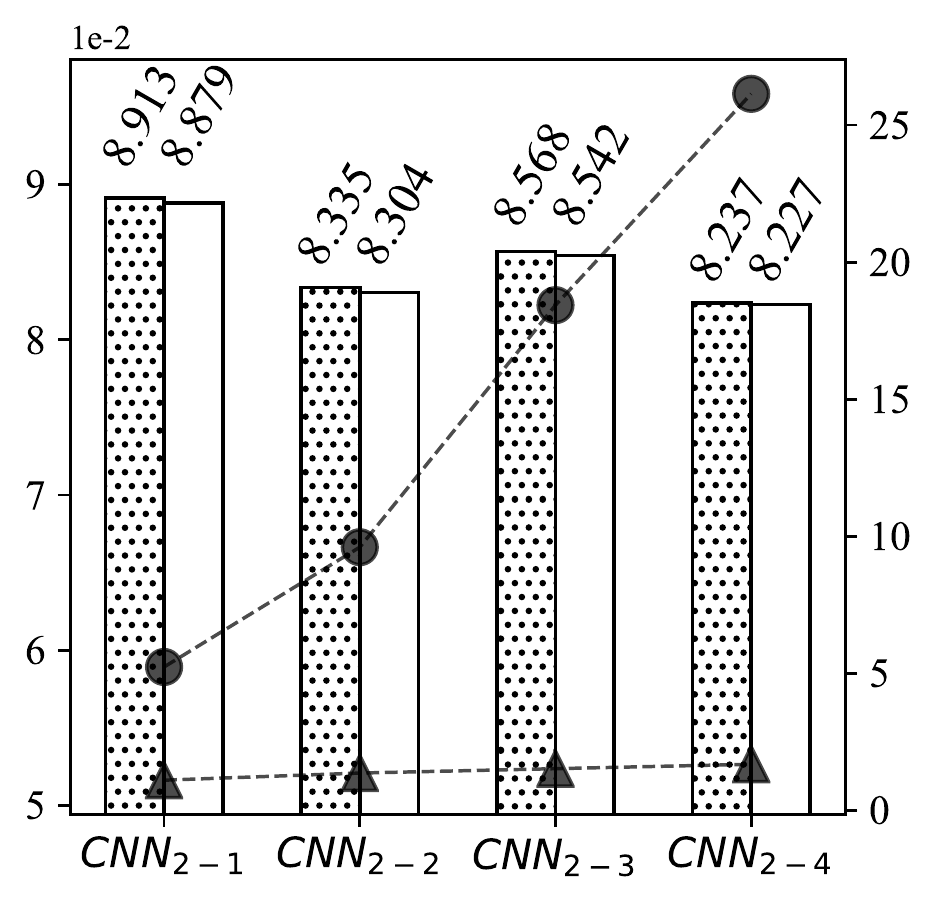}
		\caption{CNNs on Fashion Mnist.}
		\label{fig:GDSampling_fashion_mnist}
	\end{subfigure}
 \vspace{-1mm}
	\caption{Comparison between sampling-based and gradient-based algorithms for the robustness verification of 16 neural networks.}
 	\vspace{-5mm}
     \label{GDSampling}
\end{figure*}

\begin{figure}
	\centering
	\begin{subfigure}{0.23\textwidth}
		\includegraphics[width=\textwidth]{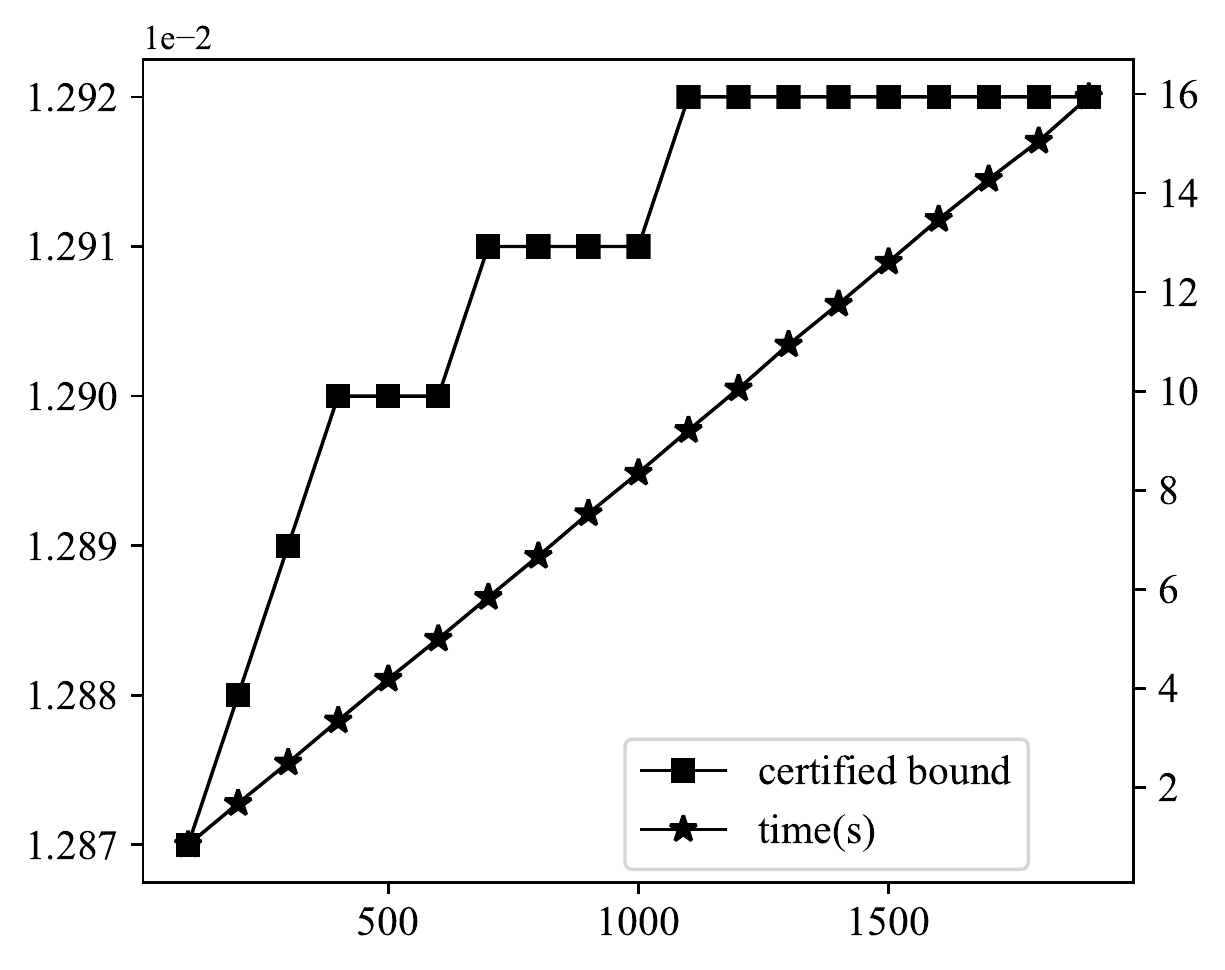}
		\caption{Sampling-based}
		\label{fig9:Sampling_fashion_mnist_fnn}
	\end{subfigure}
	\hfill
	\begin{subfigure}{0.23\textwidth}
		\includegraphics[width=\textwidth]{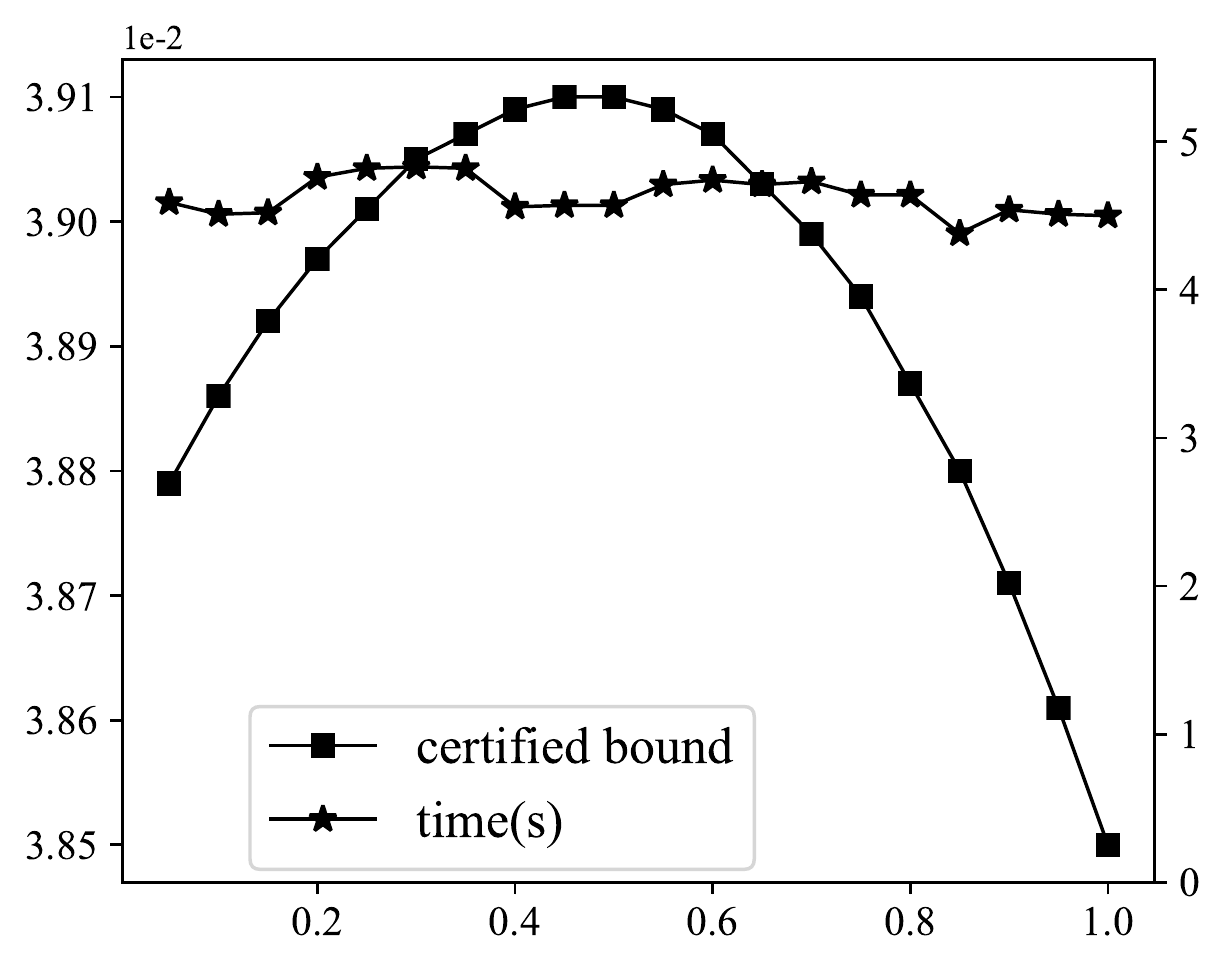}
		\caption{Gradient-based}
		\label{fig9:GD_fashion_mnist_fnn}
	\end{subfigure}
	\hfill
	\caption{The effects of hyper-parameters in the  sampling-based and gradient-based  under-approximation algorithms.}
	\label{hyper_para_fig9}
\end{figure}



\noindent 
\textit{Experiment II.}  To answer research question 2, we 
conduct experiments on 4 neural networks
 trained on Mnist and Fashion-Mnist, respectively. 
 Figure \ref{fig9:Sampling_fashion_mnist_fnn} shows the relations between the computed certified lower bound, the number of samples, and the time cost for an FNN trained on Mnist. 
 The computed bound is monotonously increasing with the number of samples. The increase becomes slow and almost stops when the number of samples reaches over 1000. That is because the underestimated approximation domain cannot be improved with more samples. As for efficiency, there is a linear relation between the number of samples and the time cost. The results for the other three networks are similar, and we put them in the appendix due to the space limit. 
 


We conduct similar experiments on the gradient descent-based algorithm. We consider the step length from $0.05\epsilon$ to $\epsilon$ by step of $0.05\epsilon$. Figure \ref{fig9:GD_fashion_mnist_fnn} shows the relation between the computed bound and the step length for an FNN trained on Mnist. We observe that when the step length is set around $0.45\epsilon$, 
the computed bound is maximal. This finding is also applicable to other networks. The time cost is almost the same and independent of the step length, as shown in Figure \ref{fig9:GD_fashion_mnist_fnn}.

%

\vspace{-1mm}
\subsection{Threats to Validity}
\vspace{-1mm}

It is worth mentioning that 
our approach may not always perform better than NeWise, as shown in Table \ref{tab:sigmoid}. That is because NeWise defines the provably tightest approximations when neural networks are monotonous \cite{2208.09872}. A sufficient condition for the monotonousness is when all the weights in the networks are non-negative. This special type of networks are often used in some specific applications to autoencoding \cite{ali2017automatic,DBLP:conf/icassp/NeacsuPB20} and  the detection of malware \cite{ceschin2019shallow,kargarnovin2021mal2gcn} and spam \cite{DBLP:journals/corr/abs-1806-06108}. 
A network may be  monotonous even if it contains both positive and negative weights. In such  case, NeWise is more precise in the verification than \textsf{DualApp}. However, our approach is complementary to NeWise and is applicable to more general non-monotonous networks. 

Another possible threat is the determining of an appropriate number of sampled inputs and a step length for the gradient-based approach. Our experimental results statistically show that 1000 samples and 0.45 step length are enough to achieve tight approximations. 
However, there may be better choices for these hyper-parameters for defining tighter approximations. 

	\section{Related Work}\label{sec:rel}
This work is a sequel to several works on neural network robustness verification based on approximations. We classify them into two categories.

\noindent \textit{Over-approximation approaches.}
Due to the intrinsic complexity in the neural network robustness verification, approximating the non-linear activation functions is the mainstreaming approach for scalability. Zhang \emph{et al.} defined three cases for over-approximating S-curved activation functions \cite{zhang2018efficient}. Wu and Zhang proposed a fine-grained approach and identified five cases for defining tighter approximations \cite{wu2021tightening}. Lyu \emph{et al.} proposed to define tight approximations by optimization at the price of sacrificing efficiency. Henriksen and Lomuscio \cite{HenriksenL20} defined tight approximations by minimizing the gap area between the bound and the curve. However, all these approaches are proved superior to others only on specific networks \cite{2208.09872}. The approximation approach proposed in the work \cite{2208.09872} is proved to be the tightest when the networks are monotonous. All these approaches only consider overestimated approximation domains. Paulsen and Wang recently proposed an interesting approach for synthesizing tight approximations guided by generated examples \cite{paulsen2022example,paulsen2022linsyn}. Their approach shares a similar idea to ours by computing sound and tight over-approximations from unsound templates. However, their approach needs global optimization techniques to guarantee soundness, while our approach ensures the soundness of individual neurons statistically. 



\noindent \textit{Under-approximation approaches.} The essence of under-approximation in our approach is to estimate the lower and upper input bounds of activation functions. There are several related approaches based on white-box attacks \cite{chakraborty2018adversarial} and testings \cite{he2020towards}. For instance, 
the fast gradient sign method (FGSM) \cite{goodfellow2014explaining} is a well-known approach for generating adversarial examples to intrigue corner cases for classifications. Other attack approaches include C\&W \cite{carlini2017towards}, DeepFool \cite{moosavi2016deepfool}, JSMA \cite{papernot2016limitations}, etc. 
The white-box testing for neural networks is to generate specific test cases to intrigue target neurons under different coverage criteria. Various adversarial sample generation approaches have been proposed \cite{lee2020effective,sun2019structural,yu2022white,guo2018dlfuzz}. We believe these attack and testing approaches can be tailored for under-approximations.


\section{Concluding Remarks and Future Work}\label{sec:conc}

We have proposed an under-approximation guided approach to defining tight over-approximations for the robustness verification of deep neural networks. We identified another important factor, i.e., \emph{approximation domain}, which is missed by almost all the existing approximation approaches for defining tight over-approximations. We analyzed over-approximations that have dual overestimation effects and demonstrated that under-approximation can effectively reduce the overestimation. We proposed two complementary under-approximation approaches and  implemented a prototype tool \textsf{DualApp} to evaluate our approaches extensively on a suite of benchmarks. The experimental results demonstrated that \textsf{DualApp} outperformed the state-of-the-art tools with up to 71.22\% improvement to certified robustness bounds. 

Our dual-approximation approach can be integrated into other abstraction-based neural network verification approaches \cite{gehr2018ai2,singh2019abstract,zhang2020detecting}. That is because all these approaches require non-linear activation functions that shall be over-approximated to handle abstract domains.  Besides robustness verification, we believe our approach is also applicable to the variants of robustness verification problems,  such as fairness \cite{bastani2019probabilistic} and $\epsilon$-weekend robustness \cite{huang2022}. The verification of these properties can be reduced to optimization problems containing the nonlinear activation functions in networks. 

The dual-approximation approach also sheds light on a new promising direction of combing testing approaches and verification techniques. For neural network testing, there are several under-approximation approaches, \textit{e.g.} fuzzing and mutating samples to activate target neurons in a neural network. These approaches can be integrated into our approach to compute less underestimated approximation domains more efficiently. We would consider the integration as our future work.

\onecolumn \begin{multicols}{2}
\bibliographystyle{IEEEtran}
\bibliography{icse23}	
\end{multicols}
%
%

\end{document}